\newtheorem{theorem}{Theorem}
\newtheorem{lemma}[theorem]{Lemma}
\newtheorem{claim}[theorem]{Claim}
\newcounter{enumrom}
\renewcommand{\theenumrom}{(\roman{enumrom})}
\renewcommand{\@endtheorem}{\endtrivlist}
\renewcommand{\thefigure}{{\@arabic\c@figure}}
\renewcommand{\fnum@figure}{{\bf Figure\,\thefigure}}
\newcommand{\ml}[1]{#1}
\newcommand{\sj}[1]{#1}
\newcommand{\bikd}[1]{#1}
\newcommand{\suppress}[1]{}
\newcommand{\bp}{\bar{p}}
\newcommand{\mbf}{\mathbf}
\newcommand{\mess}{u}
\newcommand{\ind}{r}
\newcommand{\Ca}{{C}}
\def\cX{\mbox{$\cal{X}$}}
\def\cY{\mbox{$\mc{Y}$}}
\def\cI{{\mbox{${\cal{I}}$}}}
\def\cV{\mbox{$\cal{V}$}}
\def\cE{\mbox{$\cal{E}$}}
\def\cU{\mbox{$\cal{U}$}}
\def\rR{\mathfrak{R}}
\def\sR{\mathscr{R}}
\def\e{\varepsilon}
\def\bx{{\bf x}}
\def\be{{\bf e}}
\def\by{{\bf y}}
\def\bU{{\bf U}}
\def\be{{\bf e}}
\def\bl{{n}}
\def\rate{{R}}
\newcommand{\Graph}{{{\cal G}}}
\def\mc{\mathcal}
\def\p{{\rho}}
\def\01{\{0,1\}}
\def\bsc{\mathsf{BSC}}
\newcommand{\remove}[1]{}
\begin{document}

\IEEEoverridecommandlockouts

\title{Upper Bounds on the Capacity of Binary Channels with Causal Adversaries}

\author{Bikash Kumar Dey,~\IEEEmembership{Member,~IEEE},
Sidharth Jaggi,~\IEEEmembership{Member,~IEEE},
Michael Langberg,~\IEEEmembership{Member,~IEEE},
Anand~D.~Sarwate,~\IEEEmembership{Member,~IEEE}
\thanks{Manuscript received April 11, 2012; revised September 26, 2012.  Date of current version \today.
Authors are in alphabetical order. Work supported in part by ISF grant 480/08, RGC GRF grants 412608 and 412809, RGC AoE grant on Institute of Network Coding, established under the University Grant Committee of Hong Kong, CUHK MoE-Microsoft Key Laboratory of Humancentric Computing and Interface Technologies, the Bharti Centre for Communication in IIT Bombay, India, and the California Institute for Telecommunications and Information Technology (CALIT2) at UC San Diego.
A preliminary version of this work was presented at the IEEE International Symposium on Information Theory, Cambridge, MA, USA, July 2012~\cite{isit2012}.
}
\thanks{B.K. Dey is with the Department of Electrical Engineering
at the Indian Institute of Technology Bombay, Powai, Mumbai 400 076, India.  
Email : \texttt{bikash@ee.iitb.ac.in}.  
S. Jaggi is with the Department of Information Engineering
at the Chinese University of Hong Kong, Shatin, N.T., Hong Kong.  
Email : \texttt{jaggi@ie.cuhk.edu.hk}.
M. Langberg is with the Department of Mathematics and Computer Science
at The Open University of Israel, 108 Ravutski St., Raanana 43107, Israel.
Email : \texttt{mikel@openu.ac.il}.
A.D. Sarwate is with the Toyota Technological Institute at Chicago,
6045 S. Kenwood Ave., Chicago, IL 60637, USA. 
Email : \texttt{asarwate@ttic.edu}.
}
\thanks{Communicated by T. Weissman, Associate Editor for Shannon Theory.}
\thanks{Digital Object Identifier 10.1109/TIT.2013.XXXXXXXX}
\thanks{Copyright (c) 2012 IEEE. Personal use of this material is permitted. �However, permission to use this material for any other purposes must be obtained from the IEEE by sending a request to \texttt{pubs-permissions@ieee.org}.}
}

\maketitle
\begin{abstract}
In this work we consider the communication of information in the presence of a
{\em causal} adversarial jammer.
In the setting under study, a sender wishes to communicate a message to a receiver by transmitting a codeword $\bx=(x_1,\dots,x_\bl)$ bit-by-bit over a communication channel.  The sender and the receiver do not share common randomness. The adversarial jammer can view the transmitted bits $x_i$ one at a time, and can change up to a $p$-fraction of them.
However, the decisions of the jammer must be made in a {\em causal} manner.
Namely, for each bit $x_i$ the jammer's decision on whether to corrupt it or not 
must depend only on $x_j$ for $j \leq  i$. This is in contrast to the ``classical'' adversarial jamming situations in which the jammer has no knowledge of $\bx$, or knows $\bx$ completely.
In this work, we present upper bounds (that hold under both the average and maximal probability of error criteria) on the capacity which hold for both deterministic and stochastic encoding schemes.
\end{abstract}

\begin{IEEEkeywords}
  channel coding, arbitrarily varying channels, jamming
\end{IEEEkeywords}



\section{Introduction}

Alice wishes to transmit a message $\mess$ to Bob over a binary-input
binary-output channel. To do so, she encodes $\mess$ into a length-$\bl$ binary
vector $\bx$ and transmits it over the channel. However, the channel is
controlled by a malicious adversary Calvin who may observe the transmissions,
and attempts to jam communication by flipping up to a \bikd{$p$ fraction} of the bits \bikd{transmitted by Alice.}
Since he must act in a causal manner, Calvin's decisions on
whether or not to flip \sj{the} bit $x_i$ must be a function solely of the bits
$x_1,\ldots,x_i$ he has observed thus far. This communication scenario models
jamming by an adversary who is limited in his jamming capability (perhaps due to
limited transmit energy) and is causal.
This {\em causality} assumption is reasonable for many communication channels,
both wired and wireless.
Calvin can only corrupt a bit when it is transmitted (and thus its error is
based on its view so far).
To decode the transmitted message, Bob waits until all the bits have arrived.

In this paper we investigate the information-theoretic limits of communication
in this setting.  We stress that in our model Calvin knows everything that both Alice and Bob do -- there is no {\it shared secret or common randomness} \sj(a model where such a shared secret {\it may} be allowed has been considered in the literature pertaining to {\it Arbitrarily Varying Channels}, discussed further in \bikd{Section}~\ref{subsec:prev}).  However, we make no assumptions about the computational tractability of Alice, Bob, or Calvin's encoding, decoding and jamming processes.  
Our main contribution in this work is a {\it converse} that helps \sj{to}
make progress towards a better understanding of the communication \sj{{\it rates } (average number of bits per channel use)} 
achievable against a causal adversary. Specifically, we describe and analyze a
novel jamming strategy for Calvin and show that it (upper) bounds the rate 
of communication regardless of the coding strategy used by Alice and Bob.
This jamming strategy results in Calvin being able to force Bob's {\it average} probability of decoding error over all of Alice's messages to be bounded away from zero (and hence correspondingly also his {\it maximum} probability of error).

\subsection{Previous and related work}\label{subsec:prev}

Many of the following works deal with related channels; we restrict
our discussion mostly to the binary-input binary-output case, \sj{except where specifically indicated otherwise.}

\noindent{\bf Coding theory model:}  A very strong class of adversarial channels
is one where Calvin is {\it omniscient} -- he knows Alice's entire
codeword $\bx$ prior to transmission and can tailor the pattern of up to $p \bl$
bit-flips to each specific transmission.  This is the {`` worst-case noise''} model
studied in coding theory. \sj{In this model there is no randomness in code design, and it is desired that Bob {\it always} decodes correctly.}  For binary channels, characterizing the capacity
 has been an open problem for several decades.  
The best known upper bound is due to McEliece {\it et al.}~\cite{McEliece77} 
as the solution of an
LP, and the best known achievable scheme corresponds to codes suggested by
Gilbert and Varshamov~\cite{Gil52,Var57}, which achieve a rate of $1-H(2p)$.
Improving either of these bounds would be a significant
breakthrough.\footnote{As is often the case, results for channels over ``large"
alphabets are significantly easier. \sj{In the ``intermediate'' alphabet-size regime, wherein the alphabet is of size at least $49$, advances in Algebraic-Geometry codes over the last three decades (see~\cite{TsaVN:07} for a survey) have resulted in codes exceeding the Gilbert-Varshamov bound.} For alphabets larger than $\bl$, the bound
of $1-2p$ due to Singleton~\cite{Singleton:64} is known to be achievable in a
	computationally efficient manner via Reed-Solomon codes~\cite{ReedSolomon:60}. }

\noindent{\bf Information theory model:}  A much weaker class of adversarial
channels is one where Calvin generates bit flips in an i.i.d. manner with probability $p$ \sj{and Bob must decode correctly ``with high probability'' over the randomness in Calvin's bit-flips.} 
The original work of Shannon~\cite{Shannon48} effectively characterized 
the capacity of this {\em binary symmetric channel} $\bsc(p)$.  The capacity $1-H(p)$ in Shannon's setting (for crossover probability $p$) is strictly greater than that of the \sj{coding theory} model.

\noindent{\bf Causal adversarial model:} The class of channels considered in
this work, {\it i.e.}, that of {\it causal adversaries}, fall\sj{s} in between \sj{the above}
two extremes. In one direction this is because a causal adversary is certainly
no stronger than an omniscient adversary, since he cannot tailor his jamming
strategy to take into account Alice's future transmissions. Indeed, the work
of \sj{Haviv and Langberg}~\cite{HL11} indicates that (for $2p< H^{-1}(1/2)\simeq 0.11$) rates strictly better than those
achievable by \sj{Gilbert-Varshamov} codes \sj{~\cite{Gil52,Var57}} against an omniscient adversary are achievable against a
causal adversary. However, since it is still unknown whether Gilbert-Varshamov codes are
optimal against omniscient adversaries, it is unknown whether causal adversaries
are indeed strictly weaker than omniscient adversaries. \sj{Nonetheless, the Gilbert-Varshamov bound and the bound of~\cite{HL11} indicate that for $p < 1/4$ the capacity under causal adversaries is bounded away from zero.}

In the other direction, the causal adversarial model under study is at least as strong as the information theoretic model in which Calvin generates bit flips in an i.i.d. manner.
Specifically, \sj{if $p\leq 1/2$, for any $\delta_p > 0$ and sufficiently long block-length $\bl$ a causal adversary can ignore the transmitted codeword seen so far and just mimic the behavior of a binary symmetric channel $\bsc(p-\delta_p)$ -- with high probability he does not exceed his budget of $p\bl$ bit-flips. Similarly, if $p > 1/2$, Calvin simply mimics the behavior of a $\bsc(1/2)$.}
This implies that when communicating in the presence of causal adversaries with jamming capabilities that are parametrized by $p$, $1-H(p)$ is an upper bound on the achievable rate for $p\leq 1/2$, and no positive rate is achievable for $p>1/2$.
Improving over this na\"ive upper bound \sj{(and hence narrowing the gap to the \bikd{lower} bound of~\cite{HL11})} is the focus of the paper.

The improved upper bounds we present hold for general coding schemes that allow Alice to encode a message $\mess$ to one of several possible codewords $\bx
\in \{\bx(\mess,\ind)\}$, where $\ind$ is a random source available to Alice but unknown to either Bob or Calvin.
Such general coding schemes are referred to as {\em stochastic} coding schemes.
We stress that in such schemes there is no shared randomness between Alice and Bob, and the source of randomness in Alice's encoder is solely known to Alice.

\noindent{\bf Arbitrarily Varying Channels:}  Our model is a variant of the arbitrarily varying channel (AVC) model \cite{BlackwellBT:60random}.  The AVC model where the adversary has access to the entire codeword was considered by Ahlswede and Wolfowitz \cite{AhlswedeW70:binary,Ahlswede:73fback} but received little attention since \cite[Problem 2.6.21]{CsiszarKorner}.  General AVC models have been extended to include channels with constraints  on the adversary (such as $p\bl$ bit flips) for cases where the adversary has no access to
the codeword~\cite{CsiszarN:88positivity}, or has access to the full codeword~\cite{SarwateG:10csi}.  For binary channels in which the jammer has knowledge of the entire codeword $\bx$, \cite{Langberg:04focs} showed that $O(\log n)$ bits of common randomness is sufficient to achieve the optimal rate of $1 - H(p)$ (\sj{and the work in~\cite{Smith:07scrambling}
investigated computationally efficient constructions of such codes}).  However, issues of causality have only been studied in the context of randomized coding (when the encoder and decoder share common randomness), but not for deterministic codes or stochastic encoding.

\noindent{\bf Delayed adversaries:}  The {\it delayed adversary} model
was studied in~\cite{Langberg:08} and~\cite{DJLS10}.  In this model, the
jammer's decision on whether to corrupt $x_i$ must depend only on $x_j$ for \sj{$j
\leq  i - D\bl$} for a delay parameter \sj{$D \in [0,1]$}.  The case of \sj{$D=0$} is
exactly the causal setting \sj{studied in this work}, and that of \sj{$D = 1$} corresponds to the
``oblivious adversary" studied \sj{by Langberg}~\cite{Langberg:08}.  
\sj{In this oblivious adversary setting the work of~\cite{GuruswamiS:09stoc} demonstrates computationally efficient code constructions that achieve information-theoretically rate-optimal throughput of $1-H(p)$ for all $p<1/2$.}

\sj{In a different line of work, Dey et al.~\cite{DJLS10}}  
showed that for a large class of channels, the capacity for delay \sj{$D>0$} equals 
that of the constrained AVC model~\cite{CsiszarN:88constraints}.  In particular, a positive
delay implies that the optimal rate $1 - H(p)$ is achievable against a delayed
adversary \sj{over a binary-input binary-output channel}.  
In this paper we show that a causal adversary is strictly
stronger than a delayed adversary \sj{with $D>0$} for all $p > 0.0804$. \sj{For $p$ smaller than this value our techniques do not help separate the capacity regions of these two models.}

\sj{\noindent{\bf Causal and delayed adversaries for ``large alphabets'':} In the {\it large alphabet} setting (where the alphabet-size is allowed to grow without bound with increasing block-length), Dey et al.~\cite{DeyJL:09allerton} give a full characterization of the capacity-region of several variants of both the causal adversary and the delayed adversary models. They further give computationally efficient codes achieving every point in the capacity regions for the models considered.
In general, in the large alphabet regime code design is easier than in the binary alphabet regime (that is the primary focus of this work) since with large alphabets, a ``few random hashes'' can be hidden inside each symbol with asymptotically negligible rate-loss. These hashes aid the decoder in detecting the adversarial attack pattern and correcting for it. In the binary alphabet setting this technique is not applicable -- this is one of the bottlenecks in further narrowing the gaps between outer and inner bounds for the model considered in this work.}

\noindent{\bf Previous attacks:} 
\sj{This work} continues our preliminary work on binary causal channels~\cite{LangbergJD09} \sj{(and a related result of Guruswami and Smith~\cite{GuruswamiS:09stoc})}, which proposed an upper bound using the so called ``wait-and-push" attack.  This work improves on this earlier work in two aspects -- specifically the bound presented is tighter, and holds also for stochastic \sj{encoding}. 
\ml{\footnote{
\ml{
For completeness, we specify the two major differences between this paper and~\cite{LangbergJD09}.  First, we propose a different two-phase attack (``babble-and-push'') which gives a tighter outer bound than the previous attack (``wait-and-push'').  In ``wait-and-push,'' Calvin passively eavesdrops in the first phase uses this information to design an error vector to confuse Bob in the second phase.  In our new attack, Calvin instead injects noise in the first phase to increase Bob's uncertainty about Alice's transmissions.  However, we must carefully choose the number of bit-flips Calvin injects in this ``babble'' phase to obtain a tighter outer bound, because Calvin must trade-off between using bit-flips to increase Bob's uncertainty and to push to an alternative codeword in the second phase.
The second improvement in this paper is that we prove that the ``babble-and-push'' attack works even when Alice and Bob use stochastic encoding ({\it i.e.}, for each message $\mess$ she has, Alice may choose to transmit one of multiple possible codewords $\bx(\mess)$, with an arbitrary random distribution over the set of codewords).  Our bounds therefore hold for general codes, as opposed to previous work~\cite{LangbergJD09}, where the outer bound was proved for codes in which each message $\mess$ corresponded to a unique $\bx(\mess)$ {\it deterministically} chosen by Alice.} 
}}


\subsection{Main result}

Our improved bounds are given in the following theorem, and are depicted (in
comparison with the previous bounds) in Figure~\ref{fig:bounds}.
For any $\bp \in  [0,p]$, let 
	$
	\alpha(p,\bp) = 1-4(p-\bp).
	$
In what follows, $C(p)$ is the capacity of the causal channel under study.
For precise definitions and model see Section~\ref{sec:model}.

\begin{theorem}
\label{the:det}
For $p \in [0,1/4]$, the capacity $C(p)$ of a binary causal adversary channel with
constraint $p$ satisfies:
\[
\Ca(p) \leq \min_{\bp \in [0,p]} \left [
\alpha(p,\bp)\left(1-H\left(\frac{\bp}{\alpha(p,\bp)}\right)\right) \right ].
\]
For $p > 1/4$ the capacity $C(p) = 0$.
\end{theorem}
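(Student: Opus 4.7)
The plan is to prove the upper bound by describing and analyzing an adversarial strategy, which I will call \emph{babble-and-push}, that forces Bob's average probability of error to be bounded away from zero whenever $R > \alpha(p,\bp)\bigl(1-H(\bp/\alpha(p,\bp))\bigr)$ for some $\bp\in[0,p]$. I will fix such a $\bp$ and set $\alpha = \alpha(p,\bp) = 1-4(p-\bp)$. Calvin splits the block of length $\bl$ into a \emph{babble} phase of length $\alpha\bl$ and a \emph{push} phase of length $(1-\alpha)\bl$. During babble, Calvin simply behaves like a memoryless $\bsc(\bp/\alpha)$, flipping each transmitted bit independently with probability $\bp/\alpha$; by standard concentration this consumes no more than $\bp\bl + o(\bl)$ flips. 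At the start of push, Calvin has observed $\bx|_{[1:\alpha\bl]}$ and knows the noise vector $\be_1$ he injected, hence he also knows Bob's received prefix $\by|_{[1:\alpha\bl]} = \bx|_{[1:\alpha\bl]} \oplus \be_1$.

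Next, Calvin will search for an alternative pair $(\mess',\ind')\neq(\mess,\ind)$ whose codeword $\bx' = \bx(\mess',\ind')$ satisfies $d_H\bigl(\bx'|_{[1:\alpha\bl]},\,\by|_{[1:\alpha\bl]}\bigr) \leq \bp\bl$. Given such a twin, Calvin injects push-phase errors chosen so that $\by|_{(\alpha\bl:\bl]}$ equals a coordinate-wise ``midpoint'' of $\bx|_{(\alpha\bl:\bl]}$ and $\bx'|_{(\alpha\bl:\bl]}$: any vector at Hamming distance $d/2$ from each, where $d$ is the phase-2 distance between $\bx$ and $\bx'$. When $d$ is near its typical value $(1-\alpha)\bl/2$, this push costs about $(1-\alpha)\bl/4 = (p-\bp)\bl$ flips, so the total budget $p\bl$ is respected. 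Crucially, after the attack $\by$ lies in the Hamming ball of radius $p\bl$ around both $\bx$ and $\bx'$, so a symmetrization/coupling argument shows that the joint laws of $(\mess,\by)$ and $(\mess',\by)$ induced by the attack are essentially interchangeable; Bob's average error is therefore at least a constant times the probability that Calvin succeeds in finding a valid twin.

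The main quantitative step will be to show that for $R>\alpha(1-H(\bp/\alpha))$, the probability Calvin finds a twin is bounded below by a positive constant. A first-moment calculation gives the expected number of candidate twins as roughly $2^{R\bl}\cdot\binom{\alpha\bl}{\bp\bl}\cdot 2^{-\alpha\bl} \approx 2^{\bl(R - \alpha(1-H(\bp/\alpha)))}$, which is exponentially large in the relevant regime. The hardest part will be converting this expectation into an actual high-probability existence statement for \emph{stochastic} encoders whose codebook is chosen adversarially rather than randomly---this is precisely the improvement over the deterministic-encoding analysis of~\cite{LangbergJD09}. I plan to handle this by conditioning on Alice's private seed $\ind$ and treating $(\mess,\ind)$ as a joint random index into $2^{R\bl}$ ``effective'' codewords, then using a second-moment or list-decoding-style concentration argument on the phase-1 projection of the codebook to force the needed twin for a non-negligible fraction of $\mess$.

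Finally, for $p > 1/4$ the same attack applies with the boundary choice $\bp = 2p - \tfrac12$, which gives $\alpha = 4p-1 \in (0,1]$ and $\bp/\alpha = 1/2$, so $H(\bp/\alpha)=1$ and the upper bound collapses to $0$; equivalently, the first-moment estimate shows that the required twin exists for essentially every received prefix, so babble-and-push defeats \emph{any} positive-rate code.
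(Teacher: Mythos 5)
Your outline reproduces the paper's two-phase ``babble-and-push'' attack, but the two steps you defer are precisely the crux of the converse, and as sketched they would fail. First, the existence of a confusable twin with a \emph{different message}: your first-moment count $2^{R\bl}\binom{\alpha\bl}{\bp\bl}2^{-\alpha\bl}$ counts $(\mess',\ind')$ pairs, and under stochastic encoding Alice can make the consistency set $B_{\by_1}$ exponentially large while almost all of its elements encode the same message (e.g., her own), so a large pair count --- and equally a second-moment refinement of it --- gives no lower bound on the number of \emph{distinct messages} Calvin can push toward; your plan of ``conditioning on the seed and a list-decoding-style concentration argument'' is left unspecified at exactly this point. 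The paper does not count at all: from the assumption $R> C+\e$ and the data-processing bound $I(\bU;\mbf{Y}_1)\le \ell\bigl(1-H(\bp\bl/\ell)\bigr)$ it deduces that $H(\bU\mid \mbf{Y}_1=\by_1)\ge \e\bl/4$ with probability at least $\e/4$ (Markov), and then a small sampling lemma converts this entropy into the statement that $m$ i.i.d. draws from the posterior on $B_{\by_1}$ produce $m$ distinct messages with probability at least $(\e/5)^{m-1}$. Some argument of this kind, genuinely controlling distinct messages rather than $(\mess',\ind')$ pairs, is indispensable for the stochastic case.

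Second, your budget analysis assumes the phase-2 distance $d=d_H(\bx_2,\bx_2')$ is ``near its typical value'' $(1-\alpha)\bl/2$. The codebook is adversarially designed, not random: Alice will keep the suffixes of mutually confusable codewords far apart precisely to defeat this push, and nothing in your proposal rules that out. The paper closes this hole with Plotkin's bound: the suffix length is $4(p-\bp)\bl-\e\bl/2$, so among $m=17/\e$ codewords of distinct messages in $B_{\by_1}$ two must lie within $2(p-\bp)\bl-\e\bl/8$, which is exactly what makes the half-distance push fit the remaining $(p-\bp)\bl$ flips. A further, smaller issue: Calvin is causal and does not know $\bx_2$ (hence $d$ or the future disagreement positions) in advance, so he cannot deterministically steer $\by_2$ to a prescribed midpoint; the paper instead has him flip each disagreement independently with probability $1/2$, which is causally implementable, concentrates within $\e\bl/16$ of $d/2$ by Chernoff, and makes the conditional law of $\by_2$ exactly symmetric under swapping $(\mess,\ind)$ and $(\mess',\ind')$ --- the property the symmetrization step actually uses, and one a deterministic midpoint rule would not obviously have. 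Your $p>1/4$ reduction via $\bp=2p-1/2$ is fine in spirit (the paper simply has Calvin mimic the $p=1/4$ attack), but it inherits the same gaps.
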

\ml{A few remarks are in order.
Notice that in the regime $\bp \leq p \leq 1/4$ it holds that $\bp \leq \alpha(p,\bp)$ and thus $\frac{\bp}{\alpha(p,\bp)}$ in the expression of Theorem~\ref{the:det} is at most of value $1$.
We show in Appendix~\ref{app:min} that 
the optimum $\bar{p}$ in the computation of $\Ca(p)$ is
	\begin{align*}
\min\left\{p, \frac{3(1-4p)}{\left (2+\sqrt[3]{1592+24\sqrt{33}} + \sqrt[3]{1592-24\sqrt{33}} \right )}\right\} & \\
&\hspace{-2in} \simeq \min\left\{p, \frac{1-4p}{8.4445}\right\}.
	\end{align*}
}
Namely, for \sj{$p$ greater than approximately $0.0804$}, the capacity $C(p)$ is bounded away from $1-H(p)$ and for $p$ \sj{less than this value} our bound equals $1-H(p)$ (in the latter case we get $\bar{p}=p$).
For $p = 1/4$ the new strategy we propose for Calvin shows that no positive rate is achievable; when $p > 1/4$ Calvin can simply mimic the case $p = 1/4$.

\begin{figure}
\centering
\includegraphics[width=3.4in]{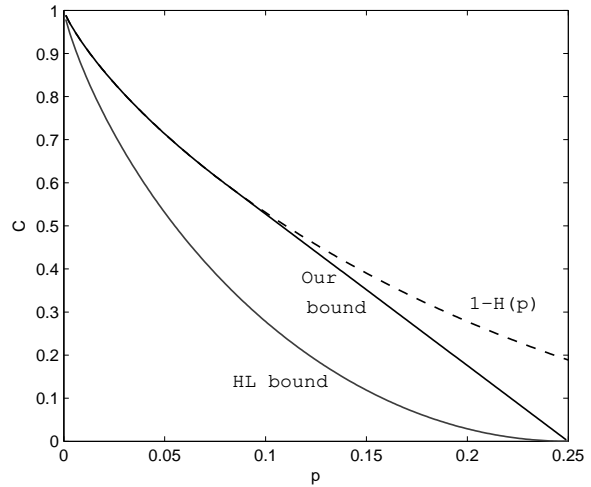}
\caption{We plot previous bounds related to the channel at hand compared to our bound. \bikd{The {\em upper} bound of $1-H(p)$ corresponds to the binary symmetric channel. The {\em lower} bound~\cite{HL11} (denoted HL) is based on an evaluation of the parameters specified by Haviv and Langberg~\cite{HL11} and it slightly improves on the Gilbert-Varshamov bound $1-H(2p)$.} 
Our improved bound appears in between.}
\label{fig:bounds}
\end{figure}

\subsection{Techniques and Proof Overview}
\label{sec:tech}

\sj{To prove Theorem~\ref{the:det} we show that no matter which
encoding/decoding scheme is used by Alice and Bob, there exists a strategy for Calvin 
that does not allow communication at rate higher than $\Ca(p)$ .
Specifically, we demonstrate that whenever Alice and Bob attempt to communicate
at a rate higher than $\Ca(p)$,
there exists a causal jamming strategy (that in general depends on Alice and Bob's encoding/decoding strategy) that allows Calvin to enforce a constant
probability of error bounded away from zero.}
{\sj{More precisely, for any block-length $\bl$, any $\e>0$ and any encoding/decoding scheme of Alice and Bob of rate $\left(C(p)+ \sqrt{\frac{c}{n}}\right) +\e$, Calvin can cause a decoding error \sj{probability}
of at least $\e^{O(1/\e)}$.}

\sj{At a high level,} Calvin uses a two-phase {\em ``babble-and-push''} strategy.
In the first phase of $\ell(p)$ channel uses, Calvin ``babbles'' by behaving
like a $\bsc(\bar{p})$ for some $\bar{p}$ chosen as a function of $p$.
In the second phase of $n - \ell(p)$ channel uses, Calvin randomly selects a codeword
from Alice and Bob's codebook \sj{that} is consistent with what Bob has 
received so far.  Calvin then ``\sj{randomly} pushes'' the remaining
part of Alice's codeword towards his selected codeword \sj{({\it i.e.}, in every location in the ``push'' phase where Alice's codeword bit differs from his selected codeword, he adds a bit-flip with probability half)}.
A decoding error occurs if Calvin is able \bikd{to} push the transmitted codeword half the distance towards the codeword selected by Calvin (via a standard symmetrization argument \cite{CsiszarN:88positivity}).

Roughly speaking, the first phase allows Calvin to gain information regarding which codeword was transmitted by Alice, while the second phase allows Calvin to use this information in order to design a corresponding symmetrization-based jamming strategy. 

\sj{In Section~\ref{sec:main} we present the proof of our main result, that of the outer bound on the capacity of online adversaries. Section~\ref{sec:det} then improves on this result (by giving a tighter bound on the probability of error) for the special case of deterministic encoders (rather than the general stochastic encoders considered in Section~\ref{sec:main}).}


\section{Model \sj{and preliminaries}}
\label{sec:model}

\sj{We first reprise some standard notation. Let $d_H(\cdot,\cdot)$  denote the {\it Hamming distance} function between two vectors (number of locations in which two vectors of the same length differ). The {\it Hamming weight $wt_H(\mbf{x})$} of a vector $\mbf{x}$ is the Hamming distance between that vector and the all-zeros vector.
Let $\log(.)$ denote the binary logarithm, here and throughout. As is common, the notation $H(A)$ is used to denote the (binary) {\it entropy} of a random variable $A$, $H(A|B)$ to denote the {\it conditional entropy of $A$ given $B$}, and $I(A;B)$ to denote the {\it mutual information between $A$ and $B$.} Also, for any real number number $x \in (0,1)$, $H(x)$ denotes the {\it binary entropy function}.
Properties of and inequalities between these functions are referenced at the point in the text where needed. The {\it indicator function} $\mbf{1}(condition)$ takes value $1$ if $condition$ is true, and $0$ otherwise.
}

Let the input and the output alphabets of the channel be $\cX$ and $\cY$
respectively.
For any positive integer $k$, let $[k] = \{1,2,\ldots,k\}$. 
We let $\cU = [2^{nR}]$ denote Alice's message set, and $\bU$ denote
the message random variable uniformly distributed in $\cU$.
A \emph{deterministic code} of rate $R$ and block-length $n$ is a pair of maps
$\mathcal{C}_d = (\Phi,\Psi)$ where $\Phi : \cU \to \cX^n$ and $\Psi :
\mc{Y}^n \to \cU$ are deterministic maps.  The map $\Phi$ is called the
encoder and the map $\Psi$ is called the decoder.  

A \emph{code with stochastic encoding and decoding} of rate $R$ and block-length $n$ is a pair
of maps $\mathcal{C}_s = (\Phi,\Psi)$ where $\Phi : \cU \to \cX^n$ and $\Psi : \mc{Y}^n \to \cU$ are probabilistic maps.
The random map $\Phi$ gives a probability distribution $\p(\cdot|\mess)$
on $\cX^n$ for every $\mess \in \cU$. 
The mapping $\Psi(\by)$ is a random variable taking values from $\cU$. 
The encoding $\Phi$ is equivalently 
represented by first picking a random variable $\rR$ from a set
$\sR$ according to a conditional distribution 
$\p_{\rR|\bU}(.|u)$, and then \sj{applying} a deterministic
encoder map $\Phi: \cU \times \sR \to \cX^n$.
Note that our definition
does not preclude there existing pairs $(\mess,\ind)$ and $(\mess',\ind')$
such that $\Phi(\mess,\ind) = \Phi(\mess',\ind')$.
As we are addressing upper bounds on the capacity $C(p)$ in this work, it is crucial to prove our results in the stochastic setting above \sj{-- any bounds proved in the stochastic setting also hold in the deterministic setting.}

A \emph{causal adversarial strategy} of block-length $n$ is a sequence of \sj{
(possibly random) mappings $\mathsf{Adv} = \{ f_{\mc{C}}^{(i)} : i \in [n] \}$. Here each $f_{\mc{C}}^{(i)}
: \cX^i  \times \mc{E}^{i-1}  \to \mc{E}$ depends on 
$\mc{C}$, and for each time $i \in [n]$ 
chooses an {\it action at time $i$,} $e_i = f_{\mc{C}}^{(i)}(x_1, \ldots,
x_i) \in \mc{E}$ -- the inputs to $f_{\mc{C}}^{(i)}$ are
the past and current channel inputs
$(x_1,x_2,\ldots,x_i)$ and its own previous actions $(e_1,e_2,\ldots,e_{i-1})$. 
The resulting {\it channel output at time $i$} is $y_i = x_i + e_i$.  }
In our setting $\mc{E} = \{0,1\}$.
The strategy obeys constraint $p$ if the Hamming weight $\|\mbf{e}\|=\sum_{i=1}^{n} e_i$ of \sj{$\mbf{e}=(e_1,\dots,e_n)$}  is at most $p n$ over
the randomness in the message, encoder, and strategy.  For a given adversarial
strategy and an input codeword $\mbf{x}$, the strategy produces a (possibly
random) $\mbf{e}$ and the output is $\mbf{y} = \mbf{x} \oplus \mbf{e}$.  Let
$\Pr_{\mathsf{Adv}}(\mbf{y} | \mbf{x})$ denote the probability of an
output $\mbf{y}$ given an input $\mbf{x}$ under the strategy $\mathsf{Adv}$ \sj{where this strategy might depend on $(u,r)$ via the adversary's causal observations of $\bx$ -- to simplify notation we henceforth do not make this explicit}.
When the block-length is understood from the context, let $\mathsf{Adv}(p)$ 
denote all adversarial strategies obeying constraint $p$.

The (average) probability of error for a code with stochastic encoding and decoding is given by
	\begin{align}
        \bar{\varepsilon} &= \max_{\mathsf{Adv} \in \mathsf{Adv}(p)}
                \frac{1}{2^{R n}}
                \sum_{\mess=1}^{ 2^{Rn} }
                \sum_{\ind \in \sR} \p_{\rR|\bU}(\ind | \mess) 
                \nonumber \\
                &\hspace{1.1in} 
                \sum_{\mbf{y}} \Pr_{\mathsf{Adv}}(\mbf{y} | \Phi(\mess,\ind) )
                        \Pr\left( \Psi(\mbf{y}) \ne \mess \right),
        \label{eq:stoc:err}
        \end{align}
where the probability $\Pr\left( \Psi(\mbf{y}) \ne \mess \right)$ is over
any randomness in the decoder \sj{(but there is {\it no} conditioning on $\ind$ since shared randomness between the encoder and the decoder is not allowed)}.  We can interpret the errors as the error in
expectation over Alice choosing a message $\bU=\mess$ and a codeword $\mbf{x}
= \Phi(\mess,\ind)$ according to the conditional distribution $\p( \mbf{x} |
\mess)$.

A rate $R$ is \emph{achievable} against a causal adversary 
under average error if for every $\delta
> 0$ there exist infinitely many block-lengths $\{\bl_i\}$, such that for each $n_i$ there is an $n_i$ block-length (stochastic) code 
 of rate at least $R$ and 
average probability of error at most $\delta$. The supremum of all 
achievable rates is the capacity. We denote by $C(p)$ the capacity of the channel corresponding to adversaries parametrized by $p$.

Consider a code of block-length $n$, rate $R$ and
error probability $\delta$.
We can, without loss of generality (w.l.o.g.), assume that the encoding probabilities $\{
\p(\mbf{x} | \mess) : \mbf{x} \in \{0,1\}^n, \mess \in [2^{nR}] \}$ are
rational.  To see why this is the case, note that for any small $\eta > 0$ we
can find rational numbers $\{\tilde{\p}(\mbf{x} | \mess)\}$ such that
$\p(\mbf{x} | \mess) - \eta \le \tilde{\p}(\mbf{x} | \mess) \le \p(\mbf{x} |
\mess)$.  Now consider a code with encoding probabilities $q(\mbf{x} | \mess) =
\tilde{\p}(\mbf{x} | \mess)$ for $\mbf{x} \ne \mbf{0}$ and assign the remaining
probability to $\mbf{0}$.  Under the same decoder,
this code has error probability at most 
$\delta + 2^{n + nR} \eta$, but since $\eta$ was arbitrary, 
the error is at most $2 \delta$. 

Now, for a given stochastic code, let $N$ be the least common multiple
of the denominators of $\p (\bx|u)$ for all $\bx, u$. 
\sj{Each codeword $\bx$ of $u$ can be treated as} $N\p (\bx|u) $ copies of the same codeword
with conditional probability $1/N$ each. So we can equivalently
associate a random variable $\rR$ with $|\sR| = N$ s.t. 
the conditional distribution $\p_{\rR|\bU} (\cdot|u)$
is uniform, and the encoding map $\Phi (u, \cdot)$ is not necessarily \sj{injective}.
Since we consider \sj{the} uniform message distribution, henceforth, w.l.o.g., we assume
that the joint distribution $\p_{\bU,\rR}(\cdot, \cdot)$ is uniform.

\sj{
We use a version of Plotkin's bound~\cite{Plo:60} in our proof.  This result gives an upper bound on the number of codes in any binary code with a given minimum distance.
\begin{theorem}[Plotkin bound~\cite{Plo:60}] \label{thm:plo}
There are at most $\frac{2d_{min}}{2d_{min}-n}$ codewords in any binary code of block-length $n$ with minimum distance $d_{min} > n/2$.
\end{theorem}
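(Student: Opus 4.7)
The plan is to use the standard double-counting argument on pairwise Hamming distances. Let $\mc{C}$ be a binary code of block-length $n$ with minimum distance $d_{min}$, and let $M = |\mc{C}|$. I will consider the sum
\[
S \;=\; \sum_{\substack{\bx, \by \in \mc{C} \\ \bx \ne \by}} d_H(\bx, \by),
\]
and derive matching upper and lower bounds on $S$ in terms of $M$, $n$, and $d_{min}$.

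The lower bound is immediate: each of the $M(M-1)$ ordered pairs of distinct codewords contributes at least $d_{min}$, so $S \ge M(M-1)\, d_{min}$. For the upper bound, I would swap the order of summation and count the contribution coordinate by coordinate. For each coordinate $i \in [n]$, let $a_i$ denote the number of codewords of $\mc{C}$ whose $i$-th bit is $1$; then exactly $a_i (M-a_i)$ unordered pairs (hence $2 a_i(M-a_i)$ ordered pairs) of codewords differ in coordinate $i$. Summing over coordinates gives
\[
S \;=\; \sum_{i=1}^{n} 2\, a_i (M - a_i).
\]
Since $a_i(M-a_i) \le M^2/4$ for every $i$ (maximizing a concave quadratic in an integer variable, with the continuous maximum at $a_i = M/2$), we obtain $S \le n M^2 / 2$.

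Combining the two bounds yields $M(M-1)\, d_{min} \le n M^2 / 2$, which after rearranging (and using $2d_{min} - n > 0$ by hypothesis) gives
\[
M \;\le\; \frac{2\, d_{min}}{2\, d_{min} - n},
\]
as required. The main subtlety, such as it is, lies in the bound $a_i(M - a_i) \le M^2/4$: this is tight over the reals but loose by a factor of $1/4$ when $M$ is odd (the true integer maximum is $(M^2-1)/4$). Fortunately the weaker real-valued bound already suffices for the statement as given, so no parity case analysis is needed, and the proof is complete.
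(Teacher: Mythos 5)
Your proof is correct. There is actually no in-paper argument to compare against: the paper imports this statement verbatim as a cited classical result (Plotkin, 1960) and uses it as a black box in its later claims, so your job here was simply to supply the standard derivation, which you do. Your double counting of pairwise Hamming distances --- lower-bounding the sum over ordered pairs by $M(M-1)d_{min}$ and upper-bounding it coordinate-by-coordinate by $nM^{2}/2$ via $a_i(M-a_i)\le M^{2}/4$ --- is the classical argument, and rearranging with $2d_{min}-n>0$ gives exactly the non-integer-rounded form $M \le \frac{2d_{min}}{2d_{min}-n}$ quoted in the paper (the sharper textbook versions of Plotkin's bound involve floors and a parity case analysis, but those are not needed for the statement as given, as you correctly observe). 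One cosmetic slip: when $M$ is odd, the true integer maximum of $a_i(M-a_i)$ is $(M^{2}-1)/4$, which falls short of $M^{2}/4$ by an additive $1/4$, not ``a factor of $1/4$''; this does not affect your argument since you only use the valid real-valued bound.
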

}

\section{Proof of Theorem~\ref{the:det}  \label{sec:main}}

\sj{In this section we analyze an adversarial attack for the general case of stochastic encoders and decoders.  For fully deterministic codes the analysis is more combinatorial and the error bounds are somewhat better, as shown in Section~\ref{sec:det}.}

Let $p \in [0,1/4]$ and let $\bp \leq p$. \sj{Without loss of generality we assume that $pn$ is an integer -- if not, Calvin can simply choose the largest $p'$ smaller than $p$ such that $p'n$ is an integer. Asymptotically in $n$, the effect of this quantization on our outer bound is negligible.}
Let $\e >0$.
In what follows we prove that the rate of communication over the causal adversarial channel (with parameter $p$) is bounded by 
	\begin{align}
	\rate \leq C + \e, 
	\label{eq:assum_start}
	\end{align}
where
	\begin{align}
	C = \alpha(p,\bp)\left(1-H\left(\frac{\bp}{\alpha(p,\bp)}\right)\right) 
	\end{align}
and
	\begin{align}
	\alpha = \alpha(p,\bp) = 1-4(p-\bp),
	\label{eq:assum_end}
	\end{align}
\noindent as defined in Theorem~\ref{the:det}. 
Namely, if \eqref{eq:assum_start}--\eqref{eq:assum_end} is violated, for any sufficiently large block-length $\bl$, and any ($\bl$-block stochastic) code $\mathcal{C}_s=(\Phi,\Psi)$ shared by Alice and Bob, there exists an adversarial jammer $\mathsf{Adv}$ that can impose a constant decoding error.
The decoding error we obtain will depend on $\e>0$.
 
For $\bp = p$, the adversary can generate a noise sequence to simulate a $\bsc$ with crossover \bikd{probability} arbitrarily close to $p$, which yields an upper bound \bikd{of $1 - H(p)$ on the capacity}.  We therefore assume that $p-\bp>0$ and that $\e<2(p-\bp)$.  We show that for such $\e > 0$ there cannot exist a sequence of codes \sj{(each with rate at least $C + \e$) of increasing block-length $n$, such that the probability of error of these codes converges to $0$ asymptotically in $n$}.  To do so we
will consider block-lengths $\bl \geq \Omega(\e^{-2})$.  Note that this argument does not provide lower
bounds on the error of codes of a given block-length, but instead shows a bound on the capacity.
We elaborate on this point at the end of the proof.

Our converse bound is based on a particular two-phase adversarial strategy for Calvin that we call ``babble-and-push.''  
Let $\ell = (\alpha + \e/2) n$ and without loss of generality assume $\ell \in \mathbb{N}$.  
For a vector $\mbf{z}$ of length $n$, let $\mbf{z}_1 = (z_1, z_2, \ldots, z_{\ell})$ and $\mbf{z}_2 = (z_{\ell+1}, z_{\ell + 2}, \ldots, z_{n})$.
In what follows, $\mbf{z_1}$ will correspond to the first phase of Calvin's attack, while $\mbf{z}_2$ corresponds to the second phase.  \sj{For $p > 0$ the strategy is given as follows.}

\begin{itemize}
\item \textbf{(``Babble'')} Calvin chooses a random subset $\Gamma$ of $\bar{p} n$ indices uniformly from the set \sj{of all $(\bar{p} n)$-sized subsets} of $\{1, 2, \ldots, \ell\}$.  For $i \in \Gamma$, Calvin flips bit $x_i$; that is, for $i \in \{1, 2, \ldots, \ell\}$, $e_i = 1$ for $i \in \Gamma$ and $e_i = 0$ for $i \notin \Gamma$. 
\item \textbf{(``Push'')} Calvin constructs the set of $(u,r)$ that have encodings $\bx(u,r)=\Phi(u,r)$ that are \textit{close} to $\mbf{y}_1=y_1, \dots, y_\ell$.  Namely, Calvin constructs the set
	\begin{align}
	B_{\mbf{y}_1} = \{ (u,r) : d_H(\by_1, \bx_1(u,r))=\bar{p}n \},
	\end{align}
and selects an element $(\mess',\ind') \in B_{\mbf{y}_1}$ uniformly at random.
Calvin then considers the corresponding codeword $\bx'=\Phi(\mess',\ind')$.
Given the selected $\mbf{x}'$, for $i >
\ell$, if $x_i \ne x'_i$, Calvin sets $e_i$ equiprobably to $0$ or $1$ until
$\sum_{j = 1}^{i} e_j = p n$ or $i = n$. Note that, under our assumption
(w.l.o.g.) of uniform $\p_{\bU,\rR}$, the {\it a posteriori} distribution of
Alice's choice $(u,r)$ given $\by_1$ is also uniform in $B_{\by_1}$.
\end{itemize}





We start by proving the following technical lemma that we use in our proof.

\begin{lemma}
\label{lemma:sampling}
Let $V$ be a random variable on a discrete finite set $\mc{V}$ with entropy $H(V) \geq \lambda$, and let $V_1, V_2, \ldots, V_m$ be i.i.d. copies of $V$.  Then
	\begin{align}
	\Pr\left( \{V_i : i = 1, \ldots, m\} \ \textrm{are all distinct} \right)
	& \nonumber \\
	& \hspace{-0.7in}
	\ge
	\left( \frac{\lambda -  1 - \log m}{\log |\mc{V}|} \right)^{m-1}
	\label{eq:multselect}
	\end{align}
\end{lemma}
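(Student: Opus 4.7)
The plan is to decompose the probability that all $V_i$ are distinct as a product of $m-1$ conditional probabilities, and to bound each factor from below by a single entropy argument.

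First, I would write, by independence and the tower property,
\[
\Pr(V_1,\ldots,V_m \text{ distinct}) = \prod_{i=1}^{m-1} \Pr\!\bigl(V_{i+1} \notin \{V_1,\ldots,V_i\} \mid V_1,\ldots,V_i \text{ distinct}\bigr).
\]
Conditioned on any realization $(v_1,\ldots,v_i)$ of distinct values, the event in the $i$-th factor becomes $\{V \notin S\}$ where $V$ has the same marginal distribution as each $V_j$ and $S = \{v_1,\ldots,v_i\}$ has size $i \le m-1 < m$. Thus it suffices to prove a single pointwise bound: for every set $S \subseteq \mc{V}$ with $|S| \le m$,
\[
\Pr(V \notin S) \;\ge\; \frac{\lambda - 1 - \log m}{\log |\mc{V}|}.
\]

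To establish this bound, I would introduce the binary indicator $Z = \mathbf{1}(V \in S)$ and use the standard decomposition $H(V) = H(V,Z) = H(Z) + H(V \mid Z)$. Since $H(Z) \le 1$ and $H(V \mid Z=1) \le \log |S| \le \log m$ while $H(V \mid Z=0) \le \log|\mc{V} \setminus S| \le \log|\mc{V}|$, writing $q = \Pr(V \notin S)$ gives
\[
\lambda \;\le\; H(V) \;\le\; 1 + (1-q)\log m + q\log|\mc{V}|.
\]
Rearranging yields $q \ge (\lambda-1-\log m)/(\log|\mc{V}|-\log m)$, which in turn is at least $(\lambda-1-\log m)/\log|\mc{V}|$ whenever the numerator is nonnegative. (If $\lambda - 1 - \log m < 0$ the claimed inequality is vacuous or trivial, so I would handle that as a degenerate case up front.)

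Combining the two steps, each of the $m-1$ conditional factors is at least $(\lambda-1-\log m)/\log|\mc{V}|$, and the product gives exactly \eqref{eq:multselect}. The only subtlety is the entropy decomposition in the pointwise step: one must split $H(V)$ along the coarse indicator $Z$ rather than trying to reason about $\Pr(V=v)$ for individual $v \in S$, since no assumption beyond a lower bound on $H(V)$ is available. Everything else is routine manipulation.
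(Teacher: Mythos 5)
Your proof is correct and follows essentially the same route as the paper's: decompose the probability of all-distinct into a chain of $m-1$ conditional probabilities, introduce an indicator variable for membership in the already-seen set, and use the grouping decomposition $H(V)=H(Z)+H(V\mid Z)$ together with $H(Z)\le 1$, $H(V\mid Z=1)\le\log m$, $H(V\mid Z=0)\le\log|\mc V|$ to lower-bound each conditional factor. The paper states $H(V)\le H(V\mid W)+H(W)$ as an inequality rather than exploiting that the indicator is a function of $V$, and it bounds $H(V\mid W=1)$ by $\log i$ before relaxing to $\log m$, but these are cosmetic; the underlying argument is identical.
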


\begin{proof}
Fix $i \le m$ and a set $v_1, v_2, \ldots, v_i \in \mc{V}$.  Let $A_i = \{v_1, \ldots, v_i\}$ and let $W_i = \mbf{1}(V_{i+1} \in A_i)$, \sj{where $\mbf{1}(.)$ denotes the {\it indicator function}}.  We can write the distribution of $V$ as a mixture:
\sj{	\[
	\Pr\left[ V_{i+i} = v \right] 
		= \sum_{{j \in \{0,1\}}}\Pr[ W_i=j ] \cdot {\Pr[ V_{i+i}=v | W_i=j ]}
	\]
}
We can {\it bound from above} the entropy of $V$ as:
\sj{	\begin{align*}
	H(V_{i+1}) &\le H(V_{i+1} | W_i) + H(W_i) \\
	&= \sum_{{j \in \{0,1\}}} \Pr[ W_i=j ] H( V_{i+1} | W_i=j) + H(W_i)
	\end{align*}
}
Since conditioning reduces entropy and the support of \sj{$V_{i+1}$} conditioned on $W_i = 1$ is
at most $i$, we have
	\[
	\lambda \le 1 + \log i + \Pr[W_i=0] \log |\mc{V}|.
	\]
\sj{ Namely, 
$$\Pr[W_i=0] \ge \frac{\lambda - 1 - \log i}{\log|\mc{V}|} \ge  \frac{\lambda - 1 - \log m}{\log|\mc{V}|}.$$}  
\sj{But the event that each $V_i$ is distinct is equivalent to the event that for each $i \in \{2,\ldots,m\}$, $W_{i}$ is $0$. }
\end{proof}

To prove the upper bound, we now present a series of claims.  Let $\mbf{X}$ denote the random variable corresponding to Alice's input codeword and let $\mbf{Y}$ be the output of the channel.  Thus $\mbf{X}_1 \in \{0,1\}^{\ell}$ is Alice's input during the ``babble'' phase of length $\ell$ and $\mbf{X}_2$ is her input during the ``push'' phase; the randomness comes from the message $\bU$ and the stochastic encoding.  Similarly, $\mbf{Y}_1$ is the random variable corresponding to the $\ell$ bits received by Bob during the ``babble'' phase, and $\mbf{Y}_2$ the $\bl - \ell$ bits of the ``push'' phase.  Let $\mathsf{Adv}$ denote the ``babble-and-push'' adversarial strategy.

Let
	\begin{align*}
	A_0 = \{\mbf{y}_1: H(\bU |\mbf{Y}_1 = \mbf{y}_1) \ge \bl \e/4\},
	\end{align*}
where the entropy $H(\bU |\mbf{Y}_1 = \mbf{y}_1)$ is measured over the randomness of the encoder, \sj{the message, and any randomness in Calvin's action during the ``babble'' phase. Further, let the event $E_0$ be defined as}
	\begin{align}
	E_0 = \{ \mbf{Y}_1 \in A_0 \}.
	\label{eq:err0}
	\end{align}

\begin{claim}
\label{claim:entropy1}
\sj{For the ``babble-and-push'' attack $\mathsf{Adv}$,
	\begin{align}
	\Pr_{\mathsf{Adv}}( E_0 ) \ge \e/4.
	\end{align}}
\end{claim}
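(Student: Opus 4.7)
The intuition is that Calvin's babble noise $\mbf{E}_1$ is a uniformly random weight-$\bar{p}n$ vector in $\{0,1\}^\ell$, chosen \emph{independently} of Alice's transmission. So the effective channel from $\bU$ to $\mbf{Y}_1$ during the babble phase has bounded capacity, and since Alice is transmitting at rate $R \ge C+\e > C$, Bob must retain substantial uncertainty about $\bU$ from $\mbf{Y}_1$ alone. I plan to turn this intuition into a quantitative lower bound on $H(\bU\mid \mbf{Y}_1)$ and then extract the tail bound on $H(\bU\mid \mbf{Y}_1 = \mbf{y}_1)$ via a reverse-Markov argument.

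First, since $\bU \to \mbf{X}_1 \to \mbf{Y}_1$ is a Markov chain and $\mbf{E}_1$ is independent of $\mbf{X}_1$, data processing gives
\[
I(\bU;\mbf{Y}_1) \le I(\mbf{X}_1;\mbf{Y}_1) = H(\mbf{Y}_1) - H(\mbf{E}_1) \le \ell - \log\binom{\ell}{\bar{p}n}.
\]
Using the standard estimate $\log\binom{\ell}{\bar{p}n} \ge \ell H(\bar{p}n/\ell) - \tfrac{1}{2}\log\ell$, the rate hypothesis $H(\bU) = nR \ge n(C+\e)$, and the substitution $\ell = (\alpha+\e/2)n$, we obtain
\[
H(\bU \mid \mbf{Y}_1) \ge n\bigl(C+\e\bigr) - n\,f(\alpha+\e/2) - O(\log n),
\]
where $f(a) := a\bigl(1 - H(\bar{p}/a)\bigr)$ and $f(\alpha) = C$ by definition.

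The key quantitative step is controlling $f(\alpha+\e/2) - f(\alpha)$. A direct calculation (using $qH'(q) - H(q) = \log(1-q)$ with $q=\bar{p}/a$) yields
\[
f'(a) = 1 + \log\!\bigl(1 - \bar{p}/a\bigr) \le 1 \quad \text{for all } a > \bar{p}.
\]
By the mean value theorem, $f(\alpha+\e/2) \le f(\alpha) + \e/2 = C + \e/2$. Substituting back,
\[
H(\bU \mid \mbf{Y}_1) \ge n(\e - \e/2) - O(\log n) = n\e/2 - O(\log n).
\]
This is why the specific choice $\ell = (\alpha+\e/2)n$ is crucial: extending the babble phase by $(\e/2)n$ bits gives Bob at most $(\e/2)n$ extra bits of information (since $f'\le 1$), leaving a net surplus of roughly $n\e/2$ out of Alice's excess rate $n\e$.

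Finally, let $X = H(\bU\mid \mbf{Y}_1 = \mbf{y}_1)$, viewed as a random variable in $\mbf{Y}_1$; then $0 \le X \le nR \le n$. Reverse Markov (applied to $n - X$) gives
\[
\Pr(X \ge n\e/4) \ge \frac{\mathbb{E}[X] - n\e/4}{n - n\e/4} \ge \frac{n\e/4 - O(\log n)}{n - n\e/4},
\]
which, for $n = \Omega(\e^{-2})$, is at least $\e/4$. The main obstacle is the derivative bound $f'(a)\le 1$: without it the proof would not be able to offset Alice's rate surplus against the extra information gained by lengthening the babble phase, and the tight choice of $\ell$ would not be justified.
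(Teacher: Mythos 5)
Your proposal is correct and follows essentially the same route as the paper's proof: data processing to bound $I(\bU;\mbf{Y}_1)$ by $I(\mbf{X}_1;\mbf{Y}_1)$, an entropy/counting bound on the babble-phase channel, the rate hypothesis to lower-bound $H(\bU\mid\mbf{Y}_1)$, the key monotonicity observation (your $f'(a)\le 1$ is algebraically the same as the paper's fact that $\alpha H(\bar p/\alpha)$ is increasing, since $f(a)=a-aH(\bar p/a)$), and finally (reverse) Markov on the random variable $H(\bU\mid\mbf{Y}_1=\by_1)$. The only cosmetic difference is that you track the $O(\log n)$ slack from $\log\binom{\ell}{\bar p n}\ge \ell H(\bar p n/\ell)-O(\log n)$ explicitly, whereas the paper silently absorbs it; both resolve it the same way by taking $n$ large enough relative to $1/\e$.
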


\begin{proof}
By the data processing inequality (\sj{$\bU \rightarrow \mbf{X}_1 \rightarrow \mbf{Y}_1$ form a Markov chain and hence $I( \bU ; \mbf{Y}_1 ) \leq I( \mbf{X}_1 ; \mbf{Y}_1)$}), and the choice of Calvin's strategy, we have
	\begin{align*}
	I( \bU ; \mbf{Y}_1 ) &\le I( \mbf{X}_1 ; \mbf{Y}_1) \\
	&\le \ell ( 1 - H( \bar{p} n/\ell ) ) \\
	& = (\alpha \bl + \e \bl/2) 
		\left ( 1 - H\left( \frac{\bar{p}}{ \alpha + \e/2 } \right) \right)
	\end{align*}
Therefore
	\begin{align*}
	H( \bU | \mbf{Y}_1 ) &\ge H(\bU) - n (\alpha + \e/2) 
		\left( 1 - H\left( \frac{\bar{p}}{ \alpha + \e/2 } \right) \right) \\
	&\ge n \left( \e 
			+ \alpha \left( 1 - H\left( \frac{\bar{p}}{\alpha} \right) \right) 
			\right) 
		\\
		&\hspace{0.5in}
		- n (\alpha + \e/2) 
		\left( 1 - H\left( \frac{\bar{p}}{ \alpha + \e/2 } \right) \right)\\
	&= n\e/2 + n\Bigg( (\alpha + \e/2)H\left( \frac{\bar{p}}{ \alpha + \e/2 } \right) 
		\\
		&\hspace{0.8in}
		- \alpha  H\left( \frac{\bar{p}}{\alpha}  \right)
			\Bigg) \\
	&\geq \bl \e/2.
	\end{align*}
\sj{Here the first inequality follows from the definition of conditional entropy, the second from the assumption underlying this proof by contradiction that $n R$ (and hence $H(\bU)$) violates \eqref{eq:assum_start}--\eqref{eq:assum_end}, and the third from the fact that the function $\alpha  H\left( {\bar{p}}/{\alpha} \right)$ is monotonically increasing in $\alpha$ since the function's derivative with respect to $\alpha$ equals $\log (\alpha/(\alpha-\bar{p}))$ which is always positive.}
Thus the expected value of $H(\bU | \mbf{Y}_1 = \mbf{y}_1)$ over $\mbf{y}_1$ is at least $n \e/2$, and the maximum value of $H(\bU | \mbf{Y}_1 = \mbf{y}_1)$ is $n R$.  \sj{Applying the Markov inequality to the random variable $nR - H(\bU | \mbf{Y}_1 = \mbf{y}_1)$, we see that
	\begin{align*}
	\Pr \left [ nR - H(\bU | \mbf{Y}_1 = \mbf{y}_1) > n R - n \e/4 \right ] &
		\\
		&\hspace{-1in}
		< \frac{nR - n \e/2}{ nR - n\e /4 } \\
		&\hspace{-1in}
		= 1 - \frac{\e/4}{ R - \e /4 },
	\end{align*}
and hence 
	\begin{align*}
	\Pr \left [   H(\bU | \mbf{Y}_1 = \mbf{y}_1) \ge   n \e/4 \right ] & \ge \frac{\e/4}{ R - \e /4 }.
	\end{align*}
Using the fact that $R \le 1$ yields the result.
}
\end{proof}

Now consider drawing $m$ pairs $(U_i,R_i)$ from $B_{\by_1}$ i.i.d.
$\sim \p_{\bU,\rR|\by_1}$ (which happens to be uniform).
Note that the marginal distribution of $U_i$ is also i.i.d. $\sim \p_{\bU|\by_1}$, which
is not necessarily uniform.  Let
	\begin{align}
	E_1 = \left\{ \{U_1, U_2, \ldots, U_m\} \ \textrm{are all distinct} \right\}.
	\end{align}	

\begin{claim}
\label{claim:uniqueness}
Let $\p_{\bU|_{\by_1}}$ be the conditional distribution of $\bU$ given
$\mbf{y}_1$ under $\mathsf{Adv}$.  Let $U_1, U_2, \ldots,
U_m$ be $m$ random variables drawn i.i.d. according to $\p_{\bU|_{\by_1}}$.
Then \sj{for large enough $n$},
	\sj{\begin{align}
	\Pr( E_1 ~|~ E_0 ) \ge (\e/5)^{m-1}.
	\end{align}}
\end{claim}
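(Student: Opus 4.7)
The plan is to reduce this claim directly to Lemma~\ref{lemma:sampling} via a conditioning argument. First I would fix an arbitrary $\mbf{y}_1 \in A_0$ and apply Lemma~\ref{lemma:sampling} to the random variable $\bU$ under the conditional distribution $\p_{\bU \mid \mbf{y}_1}$. By the definition of $A_0$ in~\eqref{eq:err0}, this distribution satisfies $H(\bU \mid \mbf{Y}_1 = \mbf{y}_1) \ge n\e/4$, so we may take $\lambda = n\e/4$ in the lemma. The alphabet is $\mc{U} = [2^{nR}]$, so $\log|\mc{V}| \le nR \le n$.

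Substituting into the bound of Lemma~\ref{lemma:sampling}, for every $\mbf{y}_1 \in A_0$ the $U_1,\dots,U_m$ drawn i.i.d.\ from $\p_{\bU \mid \mbf{y}_1}$ are all distinct with probability at least
\[
\left(\frac{n\e/4 - 1 - \log m}{nR}\right)^{m-1}
\ge \left(\frac{\e}{4} - \frac{1+\log m}{n}\right)^{m-1}.
\]
Since $m$ is a constant that does not grow with $n$ (it will be chosen in the overall analysis of the ``babble-and-push'' attack), for all $n$ large enough relative to $\e^{-1}$ and $m$, say $n \ge 20(1+\log m)/\e$, the quantity on the right is at least $(\e/5)^{m-1}$. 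This holds uniformly over $\mbf{y}_1 \in A_0$.

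Finally I would average this pointwise bound over the conditional distribution of $\mbf{Y}_1$ given $E_0 = \{\mbf{Y}_1 \in A_0\}$. Since the lower bound $(\e/5)^{m-1}$ does not depend on the specific $\mbf{y}_1 \in A_0$, taking expectation yields $\Pr(E_1 \mid E_0) \ge (\e/5)^{m-1}$, as required.

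There is no real obstacle in this argument: Lemma~\ref{lemma:sampling} does all the combinatorial work. The one point to be careful about is bookkeeping of the conditioning (the $U_i$'s are drawn conditionally on $\mbf{y}_1$, but the event $E_0$ is on $\mbf{Y}_1$), and tracking what ``large enough $n$'' means quantitatively, so that the slack between $\e/4$ and $\e/5$ swallows the $(1+\log m)/n$ correction term for all $n$ above the threshold $\Omega(\e^{-2})$ already imposed earlier in the proof.
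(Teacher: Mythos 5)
Your proposal is correct and follows essentially the same route as the paper: both invoke Lemma~\ref{lemma:sampling} with $\lambda = n\e/4$ (justified by the definition of $A_0$ from Claim~\ref{claim:entropy1}) and $|\mc{V}| \le 2^{n}$, then absorb the $(1+\log m)/n$ correction into the slack between $\e/4$ and $\e/5$ for large $n$. Your explicit averaging over $\mbf{y}_1 \in A_0$ to pass from the pointwise bound to $\Pr(E_1 \mid E_0)$ is just a slightly more careful writing of the same argument the paper leaves implicit.
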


\begin{proof}
The proof follows from Claim~\ref{claim:entropy1} and Lemma~\ref{lemma:sampling} by using $\lambda = n \e/4$, \sj{ $V = \bU$ }, and the fact that there are at most $2^n$ messages, \sj{so $|\mc{V}| \le 2^n$}.  \sj{ The lower bound in \eqref{eq:multselect} then becomes $\left(\frac{ n \e/4 - 1 - \log m}{ n } \right)^{m-1}$.  For fixed $m$ there exists a sufficiently large $n$  such that  $\e/4 - (1 + \log m)/n > \e/5$.}
\end{proof}

The preceding two claims establish a lower bound on the probability that $\mbf{Y}_1$ takes a value such that the distribution of the message $\bU$ conditioned on $\mbf{Y}_1$ has sufficient entropy.
For such values $\by_1$ of $\mbf{Y}_1$, we now use the fact that Alice's pair $(\mess,\ind)$ is uniform in $B_{\by_1}$ to analyze the probability that Calvin's ``push'' attack succeeds.  Let $\bU'$ and $\mbf{X}'$ denote the random choice of Calvin's message and codeword in the ``push'' phase.  \sj{We show that the following two events}  occur with probability bounded away from zero:
\sj{	\begin{align}
	E_2 &= \{ \bU' \ne \bU \} \\
	E_3 &= \left\{ d_H\left( \mbf{X}_2, \mbf{X}'_2 \right) \le 2(p-\bp)\bl- \e \bl/8 \right\}
	\end{align}
}
The first event is that Calvin chooses a different message than Alice and the second is that he chooses a codeword that is close enough to Alice's. \sj{The occurrence of the first event ensures that the codeword Calvin chooses to try to confuse Bob into thinking might have been transmitted corresponds to a message $u'$ different than Alice's actual \bikd{message} $u$. The occurrence of the second event ensures that the two codewords chosen ($\bx_2$ chosen by Alice, and $\bx_2'$ by Calvin) are ``close enough'' for Calvin to be able to push Bob's received codeword halfway between $\bx_2$ and $\bx_2'$.}


\begin{claim}
For the ``babble-and-push'' attack $\mathsf{Adv}$,
	\begin{align}
	\Pr_{\mathsf{Adv}}\left( E_2 \ \textrm{and}\ E_3 ~|~ E_0 \right) \ge \e^{O(1/\e)}.
	\end{align}
\label{claim:hamm2}
\end{claim}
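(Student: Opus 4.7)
The plan is to reduce the claim to a Plotkin-bound impossibility combined with the entropy-based sampling argument from Claim~\ref{claim:uniqueness}. First, I would fix any $\by_1 \in A_0$ and observe that under the babble-and-push strategy the posterior distribution of Alice's pair $(\bU, \rR)$ given $\mbf{Y}_1 = \by_1$ is uniform on $B_{\by_1}$ (since the prior is uniform and the babble bit-flips are chosen uniformly among all $\bp\bl$-subsets of $\{1,\ldots,\ell\}$), while by construction Calvin independently samples $(\bU', \rR')$ from this same uniform distribution. Let $q(\by_1) := \Pr_{\mathsf{Adv}}(E_2 \cap E_3 \mid \mbf{Y}_1 = \by_1)$; it suffices to lower bound $q(\by_1)$ by $\e^{O(1/\e)}$ for every $\by_1 \in A_0$ and then take expectation.

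Next, I would set up an exchangeable-sampling device. Let $(U_i, R_i)_{i=1}^m$ be $m$ i.i.d.\ samples from the uniform distribution on $B_{\by_1}$, and set $\bx^{(i)} = \Phi(U_i, R_i)$. For each pair $i < j$ define the ``success'' event
\[
A_{ij} := \{U_i \ne U_j\} \cap \left\{ d_H\!\left(\bx_2^{(i)}, \bx_2^{(j)}\right) \le 2(p-\bp)\bl - \e\bl/8 \right\},
\]
so that by exchangeability $\Pr(A_{ij} \mid \by_1) = q(\by_1)$. The core step is a Plotkin-bound impossibility: if $E_1$ holds (all $U_i$ distinct) and no $A_{ij}$ holds, then $\bx_2^{(1)}, \ldots, \bx_2^{(m)}$ are $m$ pairwise distinct binary vectors of length $\bl - \ell = (4(p-\bp) - \e/2)\bl$ whose minimum pairwise Hamming distance $d_{\min}$ strictly exceeds $2(p-\bp)\bl - \e\bl/8$, which in turn strictly exceeds $(\bl - \ell)/2$. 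Applying Theorem~\ref{thm:plo} forces
\[
m \le \frac{2 d_{\min}}{2 d_{\min} - (\bl - \ell)} \le \frac{16(p-\bp)}{\e}.
\]
Choosing $m = \lceil 16(p-\bp)/\e \rceil + 1$ therefore contradicts Plotkin, so that $\Pr(E_1 \cap \bigcap_{i<j} \overline{A_{ij}} \mid \by_1) = 0$.

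To conclude, I would combine this with a union bound and Claim~\ref{claim:uniqueness}. Since $\Pr(E_1 \mid \by_1) \le \sum_{i<j} \Pr(A_{ij} \mid \by_1) = \binom{m}{2}\, q(\by_1)$, and since the proof of Claim~\ref{claim:uniqueness} applies pointwise in $\by_1 \in A_0$ (Lemma~\ref{lemma:sampling} needs only $H(\bU \mid \by_1) \ge \bl\e/4$), I obtain $\Pr(E_1 \mid \by_1) \ge (\e/5)^{m-1}$ for all sufficiently large $\bl$ (depending only on $m$, hence only on $\e$ and $p$). Rearranging yields $q(\by_1) \ge (\e/5)^{m-1}/\binom{m}{2} = \e^{O(1/\e)}$, and averaging over $\by_1$ conditioned on $E_0$ gives the claim.

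The hard part will be two calibration issues. First, I need to justify carefully that Alice's a-posteriori distribution on $(\bU, \rR)$ given $\by_1$ is genuinely uniform on $B_{\by_1}$ and genuinely independent of Calvin's random choice, so that treating $((\bU,\rR),(\bU',\rR'))$ as the first two of $m$ i.i.d.\ draws is legitimate. Second, the numerical Plotkin calibration is delicate: the slack $\e\bl/8$ appearing in the definition of $E_3$ and the length $\ell = (\alpha + \e/2)\bl$ of the babble phase must be matched so that $2 d_{\min} - (\bl - \ell) = \Theta(\e \bl)$ — any looser definition of $E_3$ would either render Plotkin's bound useless (allowing arbitrarily many pairwise-distant codewords) or leave Calvin too few push-phase flips to symmetrize. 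Once these are in place, the remainder is essentially the union-bound bookkeeping above.
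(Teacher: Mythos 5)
Your proposal is correct and follows essentially the same route as the paper's proof: both use Claim~\ref{claim:uniqueness} to get $m=\Theta(1/\e)$ i.i.d.\ draws from the uniform posterior on $B_{\by_1}$ with all-distinct messages w.p.\ at least $(\e/5)^{m-1}$, invoke Plotkin's bound (Theorem~\ref{thm:plo}) with the same $\Delta = 2(p-\bp)\bl - \e\bl/8$ calibration to force a close pair, and finish with a union bound over the $O(m^2)$ pairs together with the fact that Alice's and Calvin's pairs are two independent uniform draws from $B_{\by_1}$. Your pointwise-in-$\by_1$ phrasing, the impossibility formulation of the Plotkin step, and the constants ($m=\lceil 16(p-\bp)/\e\rceil+1$, $\binom{m}{2}$ in place of $m^2$) are only cosmetic departures.
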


\begin{proof}
Conditioned on $E_0$, 
the realization $\by_1$ satisfies $H(\mbf{U}|\mbf{Y}_1=\by_1) \geq \e \bl/4$.  
We first use Claim \ref{claim:uniqueness} to lower bound the probability that $E_2$ holds.
First consider randomly sampling a set of \sj{mutually independent} pairs $S = \{(\mess_i,\ind_i) : i \in [m] \}$ uniformly from $B_{\by_1}$, and let $\mbf{X}^i$ be the codeword for $(\mess_i,\ind_i)$.  

Claim \ref{claim:uniqueness} shows that with probability at least \sj{$(\e/5)^{m-1}$}, all the messages in $S$ are distinct.  In particular, this shows that
	\sj{\[
	\Pr_{\mathsf{Adv}}\left( E_2 ~|~ E_0 \right) \ge (\e/5).
	\]}
Turning to $E_3$, applying Claim \ref{claim:uniqueness} for general $m$ shows that the probability that $m$ draws from the conditional distribution $\p_{\bU|_{\by_1}}$ yield unique messages is lower bounded by \sj{$(\e/5)^{m-1}$}.  Plotkin's bound~\cite{Plo:60} \sj{(reprised in Theorem~\ref{thm:plo})} shows that there do not exist binary error-correcting codes of block-length $n - \ell$ and minimum distance $d$ with more than $\frac{2d}{2d - (\bl - \ell)}$ codewords.  Setting $m = 17/\e$, this bound implies that with probability at least \sj{ $(\e/5)^{m-1}$} there must exist codewords $\bx, \bx'$ corresponding to $(\mess,\ind)$ and \sj{$(\mess',\ind')$ respectively (with $\mess \neq \mess'$)} within a distance $d$ that satisfies
	\[
	\frac{17}{\e} \leq \frac{2d}{2d - (\bl - \ell)} 
	\]
Solving for $d$ and using $\ell = (1 - 4(p - \bp) + \e/2) n$ shows that $d$ satisfies
\sj{	\[
	d \leq 2(p-\bp) \bl\frac{17}{17+\e} - \frac{\e\bl}{4}\frac{17}{17+\e} < 2(p-\bp) \bl - \e \bl/8.
	\]}
Let $\Delta = 2(p-\bp) \bl - \e \bl/8$\sj{.}

Let $\gamma$ be the fraction of pairs $(\mess,\ind)$ and $(\mess',\ind')$ in $B_{\by_1}$
that satisfy $E_2$ and $E_3$.   We would like to lower bound $\gamma$. 
A union bound shows that the probability over the selection
of $S$ gives the upper bound
\ml{
	\begin{align}
	\label{eq:new1}
	\Pr\left( \bigcup_{S}
		\{ d_H(\mbf{X}^i,\mbf{X}^j) < \Delta \} \ \textbf{and} \  \{\bU^
		i \ne \bU^j\} \  \right) \le m^2 \gamma.
	\end{align}
}

%

However, the earlier argument shows that by selecting $m = 17/\e$ pairs in $S$,
we get a lower bound of $(\e/5)^{m-1}$ on 
the probability that (a) all $\{\bU^i\}$ are distinct, and (b) at least one pair $\mbf{X}^i$, $\mbf{X}^j$ has distance less that $\Delta$:
	\begin{align}
	\Pr\left( \{ \textrm{all $\bU^i \in S$ are distinct} \} \ 
	\textrm{and}\   
	\bigcup_{S} \{d_H(\mbf{X}^i,\mbf{X}^j) < \Delta \} \  \right) 
	& \nonumber \\
	&\hspace{-1.5in}
	\ge (\e/5)^{m-1}.
	\label{eq:new2}
	\end{align}


\ml{
As the event analyzed in Equation~(\ref{eq:new1}) includes that analyzed in Equation~(\ref{eq:new2}),    
we have that 
}
	\sj{\[
	\gamma \ge \frac{1}{m^2} \left(\frac{\e}{5}\right)^{m-1} 
	= \frac{17^2}{\e^2} \left(\frac{\e}{5}\right)^{17/\e - 1}.
	\]}
%
\ml{Therefore, by the definition of $\gamma$, we conclude our assertion.}
\end{proof}


The next step is to show that Calvin does not ``run out'' of bit flips during the second ``push'' phase of his attack.   This follows directly from Chernoff's bound~\cite{AloS:11}.   

\sj{We now analyze Calvin's action during the ``push'' phase. This action can be viewed as being equivalent to the following two stages. In the first stage, $d_H(\mbf{X}_2, \mbf{X}_2')$ bits are drawn i.i.d. Bernoulli-$(1/2)$ -- these bits comprise the {\it intended error vector} $\mbf{\hat{e}}$. However, Calvin may not have the power to impose this intended vector in the push phase if the weight of $\mbf{\hat{e}}$ is too large. In general, the bit-flips in Calvin's actual error vector $\mbf{{e}_2}$ correspond to the components of $\mbf{\hat{e}}$ up to the point that he runs out of his bit-budget. }

\ml{Let $d$ be the distance between the $\mbf{X}_2$ chosen by Alice and $\mbf{X}'_2$ chosen by Calvin and let}
\sj{the event $E_4$ be defined as}
\ml{
	\begin{align}
	E_4 = \left \{  wt_H(\mbf{\hat{e}}) \in 
		\left( \frac{d}{2}-\frac{\e \bl}{16},
			\frac{d}{2}+\frac{\e \bl}{16}\right) 
		\right \}.
	\label{eq:no_bit_flips2}
	\end{align}
}

\begin{claim}
\label{claim:chernoff}
For the ``babble-and-push'' attack $\mathsf{Adv}$,
	\begin{align}
	\Pr_{\mathsf{Adv}}\left( E_4 ~|~ E_2, E_3 \right) \ge 1 - 2^{-\Omega(\e^2 \bl)}.
	\end{align}
\label{claim:budget2}
\end{claim}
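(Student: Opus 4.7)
The plan is a direct application of a Chernoff/Hoeffding bound to the intended error vector $\mbf{\hat{e}}$, exploiting the fact that, by construction of the ``push'' phase, the bits of $\mbf{\hat{e}}$ are independent fair coin flips. Conditioned on the choices made in the ``babble'' phase and on Calvin's selection of $(u',r')$, the vector $\mbf{\hat{e}}$ has exactly $d = d_H(\mbf{X}_2,\mbf{X}_2')$ entries, each drawn i.i.d.\ Bernoulli-$(1/2)$, and its Hamming weight is the sum of $d$ such independent indicators with mean $d/2$. Neither $E_2$ (which is an event about the identity of the messages) nor $E_3$ (which fixes a constraint on $d$) alters this conditional distribution, since the coin flips used to generate $\mbf{\hat{e}}$ are fresh randomness of Calvin, drawn after $(u',r')$ is chosen.

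First I would apply Hoeffding's inequality to $wt_H(\mbf{\hat{e}})$: for i.i.d.\ Bernoulli-$(1/2)$ variables $Z_1,\ldots,Z_d$ and any $t>0$,
\[
\Pr\!\left(\left|\sum_{i=1}^{d} Z_i - d/2\right| \ge t\right) \le 2\exp\!\left(-\frac{2t^2}{d}\right).
\]
Setting $t = \e\bl/16$, the complement of $E_4$ has probability at most $2\exp(-\e^2\bl^2/(128 d))$.

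Next I would use the conditioning on $E_3$, which guarantees $d \le 2(p-\bp)\bl - \e\bl/8 < \bl$ (since $p<1/4$), to bound the exponent uniformly in $d$. Thus
\[
\Pr_{\mathsf{Adv}}\!\left(\overline{E_4}\;\big|\;E_2, E_3\right) \le 2 \exp\!\left(-\frac{\e^2 \bl}{128}\right) = 2^{-\Omega(\e^2 \bl)},
\]
which yields the claim.

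There is essentially no hard part here; the only thing worth stating carefully is the independence of the $Z_i$'s from the events being conditioned on. Specifically, $E_2$ and $E_3$ are measurable with respect to $(\bU,\rR,\bU',\rR',\mbf{Y}_1)$, whereas $\mbf{\hat{e}}$ is generated by fresh Bernoulli-$(1/2)$ randomness of Calvin in the push phase. Consequently, conditioning on $E_2$ and $E_3$ does not bias those coin flips, so Hoeffding's inequality is applicable with the stated parameters.
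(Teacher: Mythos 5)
Your proof is correct and follows essentially the same approach as the paper's: both apply a Chernoff/Hoeffding concentration bound to the Hamming weight of the intended error vector $\mbf{\hat{e}}$, using the constraint from $E_3$ to control the parameter $d$. Your version is slightly more explicit (stating the Hoeffding form, noting $d<\bl$ to get a bound uniform in $d$, and spelling out why conditioning on $E_2,E_3$ does not bias Calvin's fresh coin flips), but the underlying argument is the same.
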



\begin{proof}
\ml{As $d$ is} the distance between the $\mbf{X}_2$ chosen by Alice and $\mbf{X}'_2$ chosen by Calvin, without any constraint, Calvin would flip $d/2$ locations in expectation.   Conditioned on $E_2$ and $E_3$, \sj{we have the following upper bound:}
	\[
	\frac{d}{2} \leq (p-\bp) \bl - \e \bl/16.
	\]
Assume that $d/2 = (p-\bp)\bl-\e \bl/16$ (for smaller values of $d$ the bound is only tighter).
By Chernoff's bound~\cite{AloS:11}, the probability that the number of \bikd{bit flips in $\hat{e}$ (i.e., the Hamming weight of $\hat{e}$)} deviates from the expectation by more than $\e \bl/16$ is at most $2^{-\Omega(\e^2 n)}$.
\end{proof}

Note that the number of bit flips in the first phase of the algorithm is exactly $\bp \bl$, 
\ml{and thus Claim~\ref{claim:chernoff} implies that with high probability the total number of bit flips in $\mbf{\hat{e}}$ in the second phase will not exceed $\frac{d}{2} + \frac{\e \bl}{16} \leq (p -\bp) \bl$ 
and will not be significantly less than that expected ({\it i.e.}, less than $\frac{d}{2} - \frac{\e \bl}{16}$ -- in this case Bob might be able to conclude that $\mbf{X}'_2$ was not transmitted). 
If this is not the case, our analysis assumes Calvin (in the worst case for him) fails to jam Alice's transmission to Bob.}


\begin{theorem}
\label{claim:sym}
For any code with stochastic encoding of rate $R = C + \e$, under Calvin's ``babble-and-push'' strategy the average error probability $\bar{\varepsilon}$ is lower bounded by $\e^{O(1/\e)}$.
\end{theorem}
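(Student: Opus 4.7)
The plan is to chain the four preceding claims multiplicatively to bound the probability of a ``good'' event for Calvin, and then close the argument with the standard symmetrization trick that motivates the design of the push phase in the first place.

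I would begin by assembling the bounds from Claim~\ref{claim:entropy1}, Claim~\ref{claim:hamm2}, and Claim~\ref{claim:chernoff} via the chain rule. Claim~\ref{claim:entropy1} gives $\Pr_{\mathsf{Adv}}(E_0) \ge \e/4$; Claim~\ref{claim:hamm2} gives $\Pr_{\mathsf{Adv}}(E_2 \cap E_3 \mid E_0) \ge \e^{O(1/\e)}$; Claim~\ref{claim:chernoff} gives $\Pr_{\mathsf{Adv}}(E_4 \mid E_2, E_3) \ge 1 - 2^{-\Omega(\e^2 n)}$. Since the argument is asymptotic, I may take $n = \Omega(\e^{-2})$ sufficiently large that the last factor exceeds $1/2$, yielding
$$\Pr_{\mathsf{Adv}}(E_0 \cap E_2 \cap E_3 \cap E_4) \ge \e^{O(1/\e)}.$$

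Next I would run the symmetrization step on this good event. Using the w.l.o.g. reduction that $\p_{\bU,\rR}$ is uniform on $\cU \times \sR$, the posterior of Alice's pair $(u,r)$ given $\by_1$ is uniform on $B_{\by_1}$, and Calvin independently samples $(u',r')$ from that same uniform distribution. Thus the unordered pair $\{(u,r),(u',r')\}$ is exchangeable. The event $E_2 \cap E_3$ (distinct messages, codewords within distance $\Delta$) and $E_4$ (Calvin's intended error has Hamming weight close to $d/2$) both depend symmetrically on the two pairs, so conditioning on them preserves exchangeability. Moreover, on $E_4$ the weight of $\hat{\mbf{e}}$ is at most $d/2 + \e n/16 \le (p-\bar p)n$, so Calvin does not run out of budget in the push phase and the Bernoulli-$(1/2)$ draws are realized in full: at every position $i > \ell$ where $x_i \ne x'_i$, the received bit $y_i$ is equally likely to equal $x_i$ or $x'_i$.

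Combining exchangeability with the uniform Bernoulli push yields the key symmetry: the joint law of $\by = (\by_1,\by_2)$ together with the pair of messages is invariant under the swap $u \leftrightarrow u'$. Hence for any (possibly randomized) decoder $\Psi$, the sum $\Pr(\Psi(\by) \ne u) + \Pr(\Psi(\by) \ne u')$ is at least $1$, so averaging over the two equally likely labels yields conditional error at least $1/2$. Therefore
$$\bar{\varepsilon} \ge \tfrac{1}{2}\,\Pr_{\mathsf{Adv}}(E_0 \cap E_2 \cap E_3 \cap E_4) \ge \e^{O(1/\e)}.$$

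The main obstacle I anticipate is making the symmetrization rigorous in the presence of three independent randomness sources (Alice's local $\rR$; Calvin's babble set $\Gamma$ and push selection $(u',r')$; and the Bernoulli-$(1/2)$ push noise), and verifying that all of the conditioning events respect the swap $(u,r) \leftrightarrow (u',r')$. The budget event $E_4$ requires particular care because it looks like an asymmetric constraint on Calvin, but since it depends only on the (symmetric) Hamming distance $d$ and on the (independent) push-noise weight, exchangeability is in fact preserved and the $1/2$ lower bound on the conditional decoding error follows by a clean coupling argument.
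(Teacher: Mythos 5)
Your proposal is correct and follows essentially the same route as the paper: chain Claims~\ref{claim:entropy1}, \ref{claim:hamm2}, and \ref{claim:budget2} to lower-bound the probability of the good event, then use the uniformity of both $(\mess,\ind)$ and $(\mess',\ind')$ on $B_{\by_1}$ together with the Bernoulli-$(1/2)$ push (whose intended error and its complement are equiprobable, and whose weight window in $E_4$ is symmetric about $d/2$) to get swap-invariance of the joint law, so $\Pr(\Psi(\by)\ne\mess)+\Pr(\Psi(\by)\ne\mess')\ge 1$ and the conditional error is at least $1/2$. The paper writes this as the inequality $2\bar{\varepsilon}\ge \p(\mc{F})\left(1-2^{-\Omega(\e^2 n)}\right)$ restricted to outputs in $\mc{G}$, which is exactly your factor-$\tfrac12$ symmetrization, so no further changes are needed.
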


\begin{proof}
The main idea \sj{behind the proof of our outer bound} is that conditioned on events $E_0$, $E_2$, $E_3$, and $E_4$, (whose probabilities of occurrence are analyzed in Claims~\ref{claim:entropy1}, \ref{claim:hamm2}, and~\ref{claim:budget2}), Calvin can ``symmetrize'' the channel \cite{CsiszarN:88positivity}. \sj{That is, Calvin can choose to inject bit-flips in a manner so that Bob is unable to distinguish between two possible codewords $\bx$ and $\bx'$ (corresponding to different messages $u$ and $u'$) transmitted by Alice. Calvin does this by ensuring (with probability bounded away from zero) that the codeword received by Bob, $\by$, {\ml is likely to equal either} $\bx+\be$ or $\bx'+\be'$ for two valid pairs $(\bx,\be)$ and $(\bx',\be')$ of transmitted codewords and bit-flip vectors.}  

Let $(\mess,\ind)$ denote the message and randomness of Alice, $\mbf{y}_1$ be the received codeword in the ``babble'' phase, and $(\mess',\ind')$ be the message and randomness chosen by Calvin for the ``push'' phase.  Let $\p( \by_1,\mess,\ind,\mess',\ind' )$ be the joint distribution of these variables under Alice's uniform choice of $(\mess,\ind)$ and Calvin's attack.  For each $\by$, let $\p(\by | \by_1,\mess,\ind,\mess',\ind')$ be the conditional distribution of $\by$ under Calvin's attack.

The error probability can be written as
	\begin{align*}
	\bar{\varepsilon} &= \sum_{\by_1,\mess,\ind,\mess',\ind'} \p( \by_1,\mess,\ind,\mess',\ind' ) 
	\\
	&\hspace{0.5in}
	\sum_{\by_2} \p(\by | \by_1,\mess,\ind,\mess',\ind') \Pr(\Psi(\by) \ne \mess).
	\end{align*}

Let $\mc{F}$ be the set of tuples $(\by_1,\mess,\ind,\mess',\ind')$ satisfying events $E_0$, $E_2$, and $E_3$.  Claims \ref{claim:entropy1} and \ref{claim:hamm2} show that \ml{$\p( \mc{F} ) \ge (\e/4) \cdot \e^{O(1/\e)}$}.  For $(\by_1,\mess,\ind,\mess',\ind') \in \mc{F}$, we have that $u \ne u'$, and that $\bx_2(\mess,\ind)$ and $\bx_2(\mess',\ind')$ are sufficiently close.

\sj{
Assuming $E_4$ holds, if $\by_2$ results from $\mbf{x}_2$ via $\mbf{\hat{e}}$, then $\by_2$ may also have resulted from $\mbf{x}_2'$ via $\mbf{\hat{e}}^C$ (the binary complement of $\mbf{\hat{e}}$). Since $\mbf{\hat{e}}$ is generated via i.i.d. Bernoulli-$(1/2)$ components, $\mbf{\hat{e}}$ and $\mbf{\hat{e}}^C$ have the same probability.}

Thus the conditional distribution is symmetric:
	\begin{align}
	\p(\by | \by_1,\mess,\ind,\mess',\ind') = \p(\by | \by_1,\mess',\ind',\mess,\ind).
	\end{align}
Then for $(\mbf{y}_1,\mess,\ind,\mess',\ind') \in \mc{F}$, by Claim \ref{claim:budget2},
	\begin{align*}
	\sum_{ \by_2 \in \mc{G} } \p(\by_2 | \by_1,\mess,\ind,\mess',\ind') 
	&
	\ge 1 - 2^{-\Omega(\e^2 n)}.
	\end{align*}
	
Now, returning to the overall error probability, let $\p(\by_1)$ be the unconditional probability of Bob receiving $\by_1$ in the ``babble'' phase, where the probability is taken over Alice's uniform choice of $(u,r)$ and Calvin's random babble $\mbf{e}_1$.  
\bikd{Since the a-posteriori distribution of $(u,r)$ and $(u',r')$ given
$\by_1$ are independent and both uniform in $B_{\by_1}$, the joint
distribution can be written as
	\begin{align*}
	\p( \by_1,\mess,\ind,\mess',\ind' )
		&= \p(\by_1) \cdot \frac{1}{|B_{\by_1}|^2} \\
& = 
\p( \by_1,\mess',\ind',\mess,\ind).
	\end{align*}
}
Recall that for any $\by_2\in \mc{G}$,
\begin{align}
\p (\by_2 | \by_1,\mess,\ind,\mess',\ind') = \p (\by_2 | \by_1,\mess',\ind',\mess,\ind). \nonumber
\end{align}
Thus,
	\begin{align}
	2 \bar{\varepsilon} 
        &\ge \sum_{\mc{F}} \p( \by_1,\mess,\ind,\mess',\ind' )  
        		\nonumber \\
		&\hspace{0.4in}
		\Bigg(\sum_{\by_2\in \mc{G}} \p(\by_2 | \by_1,\mess,\ind,\mess',\ind')
			\Pr(\Psi(\by_1,\by_2) \ne \mess )  
		\nonumber \\
		&\hspace{0.5in} 
		+ \sum_{\by_2 \in \mc{G}} \p(\by_2 | \by_1,\mess',\ind',\mess,\ind) \Pr(\Psi(\by_1,\by_2) \ne \mess' )\Bigg)
       \nonumber \\
	&\geq \sum_{\mc{F}} \p( \by_1,\mess,\ind,\mess',\ind' ) 
		\sum_{\by_2\in \mc{G}} \p(\by_2 | \by_1,\mess,\ind,\mess',\ind')
		\nonumber \\
		&\hspace{0,6in}
		\left( \Pr(\Psi(\by_1,\by_2) \ne \mess ) + \Pr(\Psi(\by_1,\by_2) \ne \mess' )\right)
	\nonumber \\
        &\geq \sum_{\mc{F}} \p( \by_1,\mess,\ind,\mess',\ind' ) \sum_{\by_2\in \mc{G}} \p(\by_2 | \by_1,\mess,\ind,\mess',\ind')
        \\
	&\ge \e/4 \cdot \e^{O(1/\e)} \cdot \left( 1 - 2^{-\Omega(\e^2 n)} \right). \mbox{\hspace{1.34in}}	
		\label{eq:endbound} \nonumber
	\end{align}

\end{proof}

Our analysis implies a refined statement of Theorem~\ref{the:det}. Namely, let $c$ be a sufficiently large constant.
For any block-length $\bl$, any $\e>0$ and any encoding/decoding scheme of Alice and Bob of rate $\left(C(p)+ \sqrt{\frac{c}{n}}\right) +\e$, Calvin can cause a decoding error \sj{probability}
of at least $\e^{O(1/\e)}$.

\section{Improved bounds for deterministic codes \label{sec:det}}

We now present an alternative analysis for the case of deterministic encoding.  Without loss of generality, we assume each codeword corresponds to a unique message in $\cU$ so there are $2^{\rate \bl}$ distinct \sj{equiprobable} codewords in $\{ \bx(\mess) : \mess \in \cU \}$, \sj{with a unique codeword for each message}.  
The attack is the same as in Section \ref{sec:main}. Apart from the simpler proof, the analysis below gives a decoding error $\bar{\e}$ proportional to $\e$, which improves over the decoding error presented for stochastic \sj{encoding} appearing in the body of this work.

\ml{
Using the notation of Section \ref{sec:main}, for any vector $\by_1$ consider the set 
\begin{align}
B_{\by_1} = \{ (\bx,\be_1) : \bx_1+\be_1=\by_1,\  \be_1=\bar{p}n\}.
\end{align}
Here, $\be_1$ represents the potential error vector that Calvin imposes in the first stage of its attack on the transmitted codeword $\bx$.
Notice that the set $B_{\by_1}$ defined above is analogous to the set $B_{\by_1}$ defined in Section~\ref{sec:main}. Namely, for any message $u$ a pair $(u,r) \in B_{\by_1}$ in Section~\ref{sec:main} corresponds to a pair $(\bx(u),\be_1) \in B_{\by_1}$ defined above. We note that in the definition above $B_{\by_1} \cap B_{\by'_1}=\phi$ for $\by_1 \ne \by'_1$ as we assume all codewords to be distinct.
}

\begin{claim}
With probability at least $1/2$ over the codeword $\bx$ sent by Alice and the actions of Calvin in the first stage of his attack, the set $B_{\by_1}$ is of size at least $2^{\e \bl/4}/2$.
\label{claim:suff_set}
\end{claim}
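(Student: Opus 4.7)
The plan is to prove the claim via a direct counting argument on the set $B_{\by_1}$. The key observation is that under Alice's uniform choice of a deterministic codeword $\mbf{X} \in \{\bx(\mess)\}$ and Calvin's uniformly chosen first-phase error vector $\mbf{e}_1$ of weight $\bp \bl$ supported on the $\ell$ coordinates of the babble phase, the pair $(\mbf{X}, \mbf{e}_1)$ is uniformly distributed over a sample space of size $N = 2^{R\bl}\binom{\ell}{\bp \bl}$. Since $\mbf{Y}_1 = \mbf{X}_1 \oplus \mbf{e}_1$ is a deterministic function of this pair, the probability that the babble phase produces any specific $\by_1$ equals $|B_{\by_1}|/N$.

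From this I will bound the probability that $|B_{\mbf{Y}_1}|$ is small by a simple union-style estimate. Since $\mbf{Y}_1$ takes at most $2^\ell$ distinct values,
\begin{equation*}
\Pr\!\left[|B_{\mbf{Y}_1}| < T\right] \;=\; \sum_{\by_1:\, |B_{\by_1}| < T} \frac{|B_{\by_1}|}{N} \;\leq\; \frac{2^\ell T}{N}.
\end{equation*}
Setting $T = 2^{\e \bl/4}/2$, the desired lower bound $\Pr[|B_{\mbf{Y}_1}| \geq T] \geq 1/2$ reduces to showing that $\log N \geq \ell + \e \bl/4$ for all sufficiently large $\bl$.

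The final step reduces this inequality to the entropy computation already performed in Claim~\ref{claim:entropy1}. Stirling's approximation gives $\log \binom{\ell}{\bp \bl} \geq \ell H(\bp \bl/\ell) - O(\log \bl)$, so it suffices to establish
\begin{equation*}
R\bl - \ell\bigl(1 - H(\bp \bl/\ell)\bigr) \;\geq\; \e \bl/4 + O(\log \bl).
\end{equation*}
Substituting $R = C + \e$ and $\ell = (\alpha + \e/2)\bl$, the left-hand side is precisely the quantity that the proof of Claim~\ref{claim:entropy1} bounded from below by $\e \bl/2$, via the monotonicity of $\alpha H(\bp/\alpha)$ in $\alpha$. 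Thus the required inequality holds with room to spare once $\bl$ is large. The only real obstacle is ensuring that the crude support bound $|\{\by_1\}| \leq 2^\ell$ is more than offset by the growth of $N$ relative to $2^\ell$, and the $\e \bl/2$ entropy gap already isolated in Claim~\ref{claim:entropy1} supplies exactly that slack.
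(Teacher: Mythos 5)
Your proof is correct and essentially identical to the paper's. Both arguments count the $N = 2^{R\bl}\binom{\ell}{\bp\bl}$ equiprobable pairs $(\bx,\be_1)$, observe that these partition into the at most $2^\ell$ sets $B_{\by_1}$, and deduce via a pigeonhole/Markov bound that at least half the pairs land in sets of size at least half the average $N/2^\ell \geq 2^{\e\bl/4}$; your display $\Pr[|B_{\mbf{Y}_1}|<T] \leq 2^\ell T/N$ is just the paper's ``total codewords in the union of small sets is at most half of $N$'' restated in probability notation. The only stylistic difference is that you defer the final inequality $\log N \geq \ell + \e\bl/4$ to the $\e\bl/2$ slack already computed in the proof of Claim~\ref{claim:entropy1}, whereas the paper redoes the binomial-coefficient estimate inline; both paths lead to the same comparison of $R\bl + \log\binom{\ell}{\bp\bl}$ against $\ell + \e\bl/4$.
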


\begin{proof}
The proof is obtained by the following counting argument.
The number of possible sets $B_{\by_1}$ is exactly $2^\ell=2^{\alpha \bl + \e \bl/2}$.
The number of pairs \ml{$(\bx,\be_1)$} for a codeword $\bx$ and an error vector \ml{$\be_1$} (to be applied in the first stage by Calvin) is
	\begin{align*}
2^{R \bl}{{\ell}\choose{\bp \bl}} 
	&\geq 2^{R \bl}{{\alpha \bl }\choose{\bp \bl}} \\
	&\geq 2^{\rate \bl} \cdot 2^{\alpha \bl H(\bp/\alpha) - \e \bl/4} \\
	&\geq 2^{\alpha \bl+ 3\e \bl/4}.
	\end{align*}
Here \sj{the first inequality follows from the fact that $\ell \leq \alpha \bl$, the second inequality from the standard bound $2^{nH(k/n)}/(n+1) \leq {{n}\choose{k}}$ (for instance~\cite[Theorem 11.1.3]{CT06}) and the fact that $\bl$ is sufficiently large with respect to $1/\e$ and hence $2^{-\e \bl/4}$ is smaller than any polynomial in $1/(\bl+1)$, and the third inequality from the starting assumption that $R$ is at least $\e + \alpha(1-H(\bp/\alpha))$.}
Thus, the {\em average} size of a set $B_{\by_1}$ is at least $2^{\e \bl/4}$.
Consider all the sets $B_{\by_1}$ of size less than half the average $2^{\e \bl/4}/2$.
The total number of codewords in the union of these sets is at most
$$
2^\ell \cdot 2^{\e \bl/4}/2 \leq 2^{R \bl}{{\ell}\choose{\bp \bl}} \cdot \frac{1}{2},
$$
which is half the number of \ml{$(\bx,\be_1)$} pairs.
As each pair is chosen with the same probability, we conclude that with probability at least $1/2$ the pair \ml{$(\bx,\be_1)$} appears in a set $B_{\by_1}$ which is of size at least  $2^{\e \bl/4}/2$.
This completes the proof of our assertion.
\end{proof}

We now show that Claim~\ref{claim:suff_set} above implies that the transmitted codeword $\bx$ and the codeword $\bx'$ chosen by Calvin are distinct and of {\em small} Hamming distance apart
with a positive probability (independent of $\bl$).

\begin{claim}
Conditioned on Claim~\ref{claim:suff_set}, with probability at least $\frac{\e}{64}$, $\bx \neq \bx'$ and $d_H(\bx_2,\bx_2') < 2(p-\bp)n- \e n/8$.
\label{claim:hamm2new}
\end{claim}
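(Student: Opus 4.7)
The plan is to apply a Plotkin-style column-sum argument to the codewords of $B_{\by_1}$, restricted to the push-phase coordinates, and combine it with Markov's inequality to show that a random pair in $B_{\by_1} \times B_{\by_1}$ has small push-phase Hamming distance with probability $\Omega(\e)$. The distinctness condition $\bx \ne \bx'$ comes almost for free because Claim~\ref{claim:suff_set} guarantees that $|B_{\by_1}|$ is exponentially large: under the uniform a posteriori distribution of $(\bx,\be_1)$ given $\by_1$, Calvin's $(\bx',\be_1')$ is drawn independently and uniformly from $B_{\by_1}$, so $\Pr[\bx = \bx'] = 1/|B_{\by_1}| \le 2 \cdot 2^{-\e n/4} = o(1)$.

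Let $M = |B_{\by_1}|$ and, for each push-phase position $k \in \{\ell+1,\dots,n\}$, let $a_k$ be the number of codewords in $B_{\by_1}$ whose $k$-th bit is $0$. Then the number of ordered distinct pairs in $B_{\by_1}$ differing at position $k$ is $2 a_k (M - a_k) \le M^2/2$, so summing over the $n-\ell$ push-phase positions yields
\[
\sum_{\bx \ne \bx' \in B_{\by_1}} d_H(\bx_2,\bx_2') \;\le\; (n-\ell)\, M^2/2,
\]
and the conditional expected push-phase distance between a uniformly random ordered distinct pair is at most $\tfrac{n-\ell}{2}\cdot\tfrac{M}{M-1} = \tfrac{n-\ell}{2}(1+o(1))$. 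Since by construction $d := 2(p-\bp)n - \e n/8 = \tfrac{n-\ell}{2} + \e n/8$, Markov's inequality yields $\Pr[d_H(\bx_2,\bx_2') \ge d \mid \bx \ne \bx'] \le 1 - \tfrac{\e n}{8d} + o(1)$, and using $d \le n/2$ this gives $\Pr[d_H(\bx_2,\bx_2') < d \mid \bx \ne \bx'] \ge \e/4 - o(1)$. Combining with the $o(1)$ bound on $\Pr[\bx = \bx']$ delivers the required $\e/64$ with considerable slack for all sufficiently large $n$.

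I expect no serious obstacle: the argument is pinned by the parameter choice $d = \tfrac{n-\ell}{2} + \tfrac{\e n}{8}$, which places the target distance exactly $\tfrac{\e n}{8}$ above the Plotkin-implied average distance --- precisely the slack Markov needs to produce an $\Omega(\e)$ lower bound. Relative to the stochastic proof of Section~\ref{sec:main}, the deterministic setting lets us bypass the iterated-sampling uniqueness argument of Claim~\ref{claim:uniqueness}, and it is this simplification that improves the decoding-error probability from $\e^{O(1/\e)}$ down to $\Theta(\e)$.
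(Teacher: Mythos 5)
Your proof is correct, but it follows a genuinely different route from the paper's. The paper builds an undirected graph on $B_{\by_1}$ whose edges connect codewords at push-phase distance $\le d := 2(p-\bp)n - \e n/8$, observes via Plotkin's bound (Theorem~\ref{thm:plo}) that any independent set has size at most $16/\e$, invokes Tur\'an's theorem to lower-bound the average degree and hence the edge density $|\cE|/|\cV|^2 \ge \e/64$, and reads off the probability directly. You instead bypass Tur\'an entirely: you unroll the column-sum argument that underlies Plotkin's proof to bound the \emph{average} ordered-pair push-phase distance by $\tfrac{n-\ell}{2}\cdot\tfrac{M}{M-1}$, note the parameter choice puts $d$ exactly $\e n/8$ above $\tfrac{n-\ell}{2}$, and apply Markov's inequality (plus a $1/M$ bound on collisions, which Claim~\ref{claim:suff_set} makes exponentially small). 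This is more elementary and in fact yields a slightly stronger constant (around $\e/4$ rather than $\e/64$), though both comfortably meet the stated $\e/64$. The two arguments are cousins --- both turn on the fact that the average pairwise distance in any binary set of length $n-\ell$ is at most $(n-\ell)/2$ --- but the paper packages this via an extremal-graph detour while you extract the expectation bound directly, which shortens the proof.
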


\begin{proof}
Consider the undirected graph $\Graph=(\cV,\cE)$ in which the vertex set $\cV$ consists of the set $B_{\by_1}$ and  two nodes $\bx$ and $\bx'$ are connected by an edge if 
$d_H(\bx_2,\bx'_2) \leq d=2(p-\bp)n- \e n/8$.  The set of codewords defined by the suffixes of an independent set $\cI$ in $\Graph$ correspond\sj{s} to a binary error-correcting code with block-length \ml{$\bl-\ell = 4(p-\bp) \bl -\e \bl/2$} of size $|\cI|$ and minimum distance $d$.	

By Plotkin's bound~\cite{Plo:60} \sj{(reprised in Theorem~\ref{thm:plo})} there do not exist binary error correcting codes with more than $\frac{2d}{2d-(4(p-\bp)\bl - \e \bl/2)}+1$ codewords.
Thus $\cI$, any maximal independent set in $\Graph$, must satisfy
\begin{eqnarray}
|\cI| 
& \leq & \frac{2(2(p-\bp)n-\e n/8)}{2(2(p-\bp)n -\e n/8)-4(p-\bp)n + \e n/2}+1 \nonumber \\
& = & \frac{16(p-\bp)}{\e} \leq \frac{16}{\e}.
\label{eq:ind2}
\end{eqnarray}

By Tur\'{a}n's theorem \cite{turan}, any undirected graph $\Graph$ \sj{on $|\cV|$ vertices} and average degree $\Delta$ has an independent set of size at least $|\cV|/(\Delta+1)$. This, along with (\ref{eq:ind2}) implies that the average degree of our graph $\Graph$ satisfies
$$
\frac{|\cV|}{\Delta+1}\leq |\cI| \leq \frac{16}{\e}.
$$
This in turn implies that
$$
\Delta \geq \frac{\e |\cV|}{16 } -1 \geq \frac{\e |\cV|}{32}.
$$
The second inequality holds for our setting of $\bl$,
since $|\cV|$ is of size at least $2^{\e \bl/4}$.
To summarize the above discussion, we have shown that our graph $G$ has {\em large} average degree of size $\Delta \geq \frac{\e |\cV|}{32p}$. We now use this fact to analyze Calvin's attack.

By the definition of deterministic codes, any \sj{valid codeword in $\cX^n$} is transmitted with equal probability.
Also, by definition both $\bx$ (the transmitted codeword) and $\bx'$ (the codeword chosen by Calvin) are in $\cV=B_{\by_1}$. 
Hence both $\bx$ and $\bx'$ are uniform in $B_{\by_1}$. This implies that with probability $|\cE|/|\cV|^2$ the nodes corresponding to codewords $\bx$ and $\bx'$ are distinct and connected by an edge in $\Graph$.
This in turn implies that with probability $|\cE|/|\cV|^2$, $\bx \neq \bx'$ and $d_H(\bx,\bx') < 2(p-\bp)\bl- \e \bl/8$, as required.
Now
$$
\frac{|\cE|}{|\cV|^2} = \frac{\Delta|\cV|}{2|\cV|^2} \geq  \frac{\e}{64}.
$$
\end{proof}

The preceding claims provide the same guarantees as Claim \ref{claim:hamm2} appearing in the body of the paper, and so Claim \ref{claim:budget2} follows. Namely, w.h.p., Calvin does not ``run out'' of his budget of $p\bl$ bit flips.
We conclude by proving that given the analysis above Bob cannot
distinguish between the case in which $\bx$ or $\bx'$ were transmitted, using a
similar symmetrization argument.

\begin{theorem} \label{thm:deterministic}
For any code with deterministic encoding and decoding of \sj{rate} $R = C + \e$, under Calvin's ``babble-and-push'' strategy the average error probability $\bar{\varepsilon}$ is lower bounded by $\frac{\e}{256} \left( 1 - 2^{-\Omega(\e^2 n)} \right)$
\end{theorem}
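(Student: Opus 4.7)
The plan is to combine the three structural claims of Section~\ref{sec:det} with a symmetrization argument that mirrors the one used in the proof of Theorem~\ref{claim:sym}. First I would establish the joint probability of the ``good'' events: by Claim~\ref{claim:suff_set} the observed prefix $\by_1$ satisfies $|B_{\by_1}| \geq 2^{\e n/4}/2$ with probability at least $1/2$; conditioned on this, by Claim~\ref{claim:hamm2new}, Alice's codeword $\bx$ and Calvin's chosen codeword $\bx'$ are distinct and satisfy $d_H(\bx_2,\bx'_2) < 2(p-\bp)n-\e n/8$ with probability at least $\e/64$; and conditioned on both, Claim~\ref{claim:budget2} guarantees that Calvin's actual push noise has weight in the desired window around $d/2$ with probability $1 - 2^{-\Omega(\e^2 n)}$. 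Multiplying these lower bounds yields a joint probability of at least $\frac{\e}{128}\bigl(1 - 2^{-\Omega(\e^2 n)}\bigr)$ for the event that the push phase symmetrizes the channel.

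Next I would invoke the symmetrization argument. Conditioned on the good events, the pair $(\bx,\bx')$ is uniform in $B_{\by_1} \times B_{\by_1}$ restricted to distinct pairs, so the marginal law of $(\bx,\bx')$ is exchangeable; also, since Calvin's push noise $\hat{\be}$ consists of i.i.d.\ Bernoulli-$(1/2)$ bits on the positions where $\bx_2$ and $\bx'_2$ differ, the vector $\hat{\be}$ and its complement $\hat{\be}^C$ on those positions are equally likely. Consequently the conditional distribution of $\by$ given $(\by_1,\bx,\bx')$ equals that given $(\by_1,\bx',\bx)$, so no deterministic decoder $\Psi$ can simultaneously decode to $\bx$ with probability greater than $1/2$ on both hypotheses. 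This gives a lower bound of $1/2$ on the conditional decoding error probability.

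Combining the two pieces, the overall average error probability satisfies
\[
\bar{\varepsilon} \;\geq\; \tfrac{1}{2} \cdot \tfrac{\e}{128}\bigl(1 - 2^{-\Omega(\e^2 n)}\bigr) \;=\; \tfrac{\e}{256}\bigl(1 - 2^{-\Omega(\e^2 n)}\bigr),
\]
as claimed. Concretely I would write this out by summing $\bar{\varepsilon}$ over the tuples $(\by_1,\bx,\bx',\by_2)$ lying in the set $\mathcal{F}$ of ``good'' configurations (as in Section~\ref{sec:main}), swapping the roles of $\bx$ and $\bx'$ in half the terms using the symmetry, and applying the inequality $\Pr(\Psi(\by)\neq \bx) + \Pr(\Psi(\by)\neq \bx') \geq 1$ that holds whenever $\bx \neq \bx'$.

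The main obstacle I anticipate is a bookkeeping subtlety rather than a conceptual one: making precise that under deterministic encoding the a posteriori joint law of $(\bx,\bx')$ given $\by_1$ factors as uniform on $B_{\by_1}\times B_{\by_1}$ (so that the roles of Alice's and Calvin's codewords are genuinely exchangeable), and ensuring that the ``budget'' event $E_4$ from Claim~\ref{claim:budget2} really does preserve this exchangeability after the truncation caused by Calvin's weight constraint. Once this is carefully handled, the remaining calculation is the same short symmetrization computation that appears at the end of the proof of Theorem~\ref{claim:sym}, yielding the improved linear-in-$\e$ lower bound $\e/256$ in place of the $\e^{O(1/\e)}$ bound needed in the stochastic case.
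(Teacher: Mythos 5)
Your proposal is correct and follows essentially the same route as the paper's own proof: it multiplies the probabilities from Claims~\ref{claim:suff_set}, \ref{claim:hamm2new}, and \ref{claim:budget2} to get $\frac{\e}{128}\left(1-2^{-\Omega(\e^2 n)}\right)$ for the good event, and then applies the same symmetrization step as in Theorem~\ref{claim:sym} (exchangeability of $\p(\by_1,u,u')$ and the inequality $\Pr(\Psi(\by)\neq u)+\Pr(\Psi(\by)\neq u')\geq 1$ for $u \neq u'$) to lose only a factor of $2$ and obtain $\bar{\varepsilon}\geq \frac{\e}{256}\left(1-2^{-\Omega(\e^2 n)}\right)$. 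The exchangeability issue you flag is handled in the paper exactly as you suggest, via the uniformity of both $\bx$ and $\bx'$ over $B_{\by_1}$ established in Claim~\ref{claim:hamm2new} and the restriction to outputs $\by_2$ satisfying the budget event.
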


\begin{proof}
Let $u$ be the message chosen by Alice, $u'$ be the message chosen by Calvin, $\p(\by_1,u,u')$ be the joint distribution of the output during the ``babble'' phase and these two messages, and $\p(\by|\by_1,u,u')$ be the conditional distribution of the output on the result of the ``babble'' phase.

Let $\mc{G}'$ be the set of $\by_2$ such that Claim \ref{claim:budget2} is satisfied for $\bx_2(\mess)$.  As in the arguments of Theorem \ref{claim:sym}, Calvin's attack is symmetric, so that
	\[
	\p(\by_1, u, u') = \p(\by_1,u',u),
	\]
and therefore we have
	\begin{align*}
	\sum_{ \by_2 \in \mc{G}'} \p(\by | \by_1,\mess,\mess')
	\ge 1 - 2^{-\Omega(\e^2 n)}.
	\end{align*}

Let $\mc{F}$ be the set of tuples $(\by_1,\mess,\mess')$ satisfying Claim \ref{claim:suff_set} and Claim \ref{claim:hamm2new}. Following the analysis in Theorem \ref{claim:sym}, from \eqref{eq:endbound} and applying Claims \ref{claim:suff_set} and Claim \ref{claim:hamm2new}, we have
	\begin{align*}
	2 \bar{\varepsilon} &\ge \sum_{\mc{F}} \p(\by_1, u, u') \left( 1 - 2^{-\Omega(\e^2 n)} \right) \\
	&\ge \frac{\e}{128} \left( 1 - 2^{-\Omega(\e^2 n)} \right).
	\end{align*}
Dividing both sides by 2 yields the result.
\end{proof}

\section{Concluding remarks}\label{sec:conc}

In this paper we presented a novel upper bound on the rates achievable on binary additive channels with a causal adversary.  This model is weaker than the traditional worst-case error model studied in coding theory, but is stronger than an i.i.d. model for the noise.  Indeed, our results show the binary symmetric channel capacity $1 - H(p)$ is not achievable against causal adversaries.  By contrast, previous work shows that a delay of $D n$ (with $D$ a positive constant in $(0,1]$) for the adversary allows Alice and Bob to communicate at rate $1 - H(p)$.  Thus the causal adversary is strictly more powerful than the delayed adversary (which in turn is no stronger than i.i.d. noise).

To show our bound we demonstrated a new ``babble-and-push'' attack.  The adversary increases the uncertainty at the decoder during the ``babble'' phase, enabling it to choose an alternative codeword during the ``push'' phase.  The ``push'' phase succeeds because the adversary can effectively symmetrize the channel.  We demonstrate that the upper bound presented herein holds against arbitrary codes, rather than simply against deterministic codes, as is common in the coding theory literature.  Since our analysis pertains to adversarial jamming rather than random noise, the proof techniques presented may be of independent interest in the more general setting of AVCs.

\appendices

\section{The minimization in Theorem~\ref{the:det}}\label{app:min}
Let us denote $\bar{p}$ by $x$, and write the bound as a function of $x$
as
$$
f(x) = (1-4p+4x)\left(1-H\left(\frac{x}{1-4p+4x}\right)\right).
$$
So
\begin{align*}
f'(x)
	& = 4 - \frac{d}{dx}\Bigg\{ (1-4p+4x) 
		\\ 
		&\hspace{0.9in}
		\Bigg[ - \frac{x}{1-4p+4x} \log\frac{x}{1-4p+4x}  
		\\ 
		&\hspace{1.1in}
			- \frac{1-4p+3x}{1-4p+4x}\log\frac{1-4p+3x}{1-4p+4x}
			\Bigg]
		\Bigg\}  \\
& = 4 - \frac{d}{dx} \Bigg\{ - x\log\frac{x}{1-4p+4x} 
		\\ 
		&\hspace{0.9in}
		- (1-4p+3x)\log\frac{1-4p+3x}{1-4p+4x}\Bigg\}  \\
& = 4 - \Bigg\{ - \log\frac{x}{1-4p+4x} - \frac{(1-4p+4x) - 4x}{1-4p+4x}  
		\\ 
		&\hspace{0.9in}
		-3 \log\frac{1-4p+3x}{1-4p+4x} 
		\\
		&\hspace{0.9in}
		- \frac{3(1-4p+4x) - 4(1-4p+3x)}{1-4p+4x}
		\Bigg\}  \\
& = 4 + \left\{ \log\frac{x}{1-4p+4x} + 3\log\frac{1-4p+3x}{1-4p+4x}\right\}  \\
& = 4 + \log\frac{x(1-4p+3x)^3}{(1-4p+4x)^4}.
\end{align*}
First, we check for any roots of $f'(x)$ in $0\leq x\leq p$.
\begin{align*}
f'(x) = 0 & \\
&\hspace{-0.3in}  \Leftrightarrow \frac{x(1-4p+3x)^3}{(1-4p+4x)^4} = \frac{1}{16}  \\
&\hspace{-0.3in}  \Leftrightarrow ((1-4p+3x) +x)^4 = 16 x(1-4p+3x)^3  \\
&\hspace{-0.3in}  \Leftrightarrow (1-4p+3x)^4 - 12 x(1-4p+3x)^3  
	\\
	&\hspace{-0.1in} 
	+ 6x^2(1-4p+3x)^2 + 4x^3(1-4p+3x) + x^4 = 0.
\end{align*}
We now substitute, for brevity, $a = (1-4p+3x)/x$.
\begin{align}
& a^4 -12a^3 + 6a^2 + 4a + 1 = 0 \nonumber \\
& \Leftrightarrow (a-1)(a^3-11a^2-5a-1)= 0 \nonumber 
\end{align}
We now consider two cases.
If $p=0.25$, we have that $f(x)=0$ for $x=0$.
Thus setting $x=0$ will yield the minimum value for $p=1/4$.

For $p < 0.25$, we study the minimum value given $x>0$.
When $x>0$ and $p <0.25$ it holds that $1-4p >0$ and $a>3$.
Thus, for $f'(x)$ to be zero we require that 
$a^3-11a^2-5a-1= 0$,
which can be found via the general formula for cubic equations (for instance~\cite[Chap. $6$]{Dun:90}) to have one real solution and two complex conjugate solutions.
The only real solution is 
	\begin{align*}
	a_0 &= \frac{1}{3}\left (11+\sqrt[3]{1592+24\sqrt{33}} + \sqrt[3]{1592-24\sqrt{33}} \right )  \\
	&\simeq 11.4445,
	\end{align*}
giving
$x = (1-4p)/(a_0-3) \simeq (1-4p)/8.4445$. However, this value is greater than $p$ for $p < 1/(a_0+1) \simeq 1/12.4445$.
For $p \in [1/(a_0+1), 0.25]$, this solution is in the range $[0,p]$. 

Now we will see that $f'(x)$ is negative for $0< x< (1-4p)/(a_0-3)$.

For $p< 0.25$, and $0< x< (1-4p)/(a_0-3)$, we have
\begin{align*}
a = \frac{1-4p+3x}{x} >  a_0 \simeq 11.4445,
\end{align*}
so
	\begin{align*}
	a^3 - 11a^2 - 5a - 1  &\simeq (a-a_0)(a^2+0.4445a+0.087)>0,
	\end{align*}
but since $(a-1)(a^3-11a^2-5a-1) > 0$ we have $f'(x) < 0$. 
By the continuity of the objective function, $f(0) = \lim_{x\rightarrow 0} f(x)$. 
So, $f(x)$ is decreasing in $0\leq x< (1-4p)/{a_0-3}$, and
thus the optimum $\bar{p}$ is given
by
	\begin{align*}
	\bar{p} &= \min\left\{p, \frac{1-4p}{a_0-3}\right\} \\
	&= \min\left\{p, \frac{3(1-4p)}{\left (2+\sqrt[3]{1592+24\sqrt{33}} + \sqrt[3]{1592-24\sqrt{33}} \right )}\right\} \\
	&\simeq \min\left\{p, \frac{1-4p}{8.4445}\right\}.
	\end{align*}


\begin{IEEEbiographynophoto}{Bikash Kumar Dey} (S'00-M'04)
received his B.E. degree in Electronics and Telecommunication Engineering from Bengal Engineering College, Howrah, India, in 1996. He received his M.E. degree in Signal Processing and Ph.D. in Electrical Communication Engineering from the Indian Institute of Science in 1999 and 2003
respectively.

From August 1996 to June 1997, he worked at Wipro Infotech Global R\&D. In February 2003, he joined Hellosoft India Pvt. Ltd. as a Technical Member. In June 2003, he joined the International Institute of Information Technology, Hyderabad, India, as Assistant Professor. In May 2005, he joined the Department of Electrical Engineering of Indian Institute of Technology Bombay where he works as Associate Professor. His research interests include information theory, coding theory, and wireless communication.

He was awarded the Prof. I.S.N. Murthy Medal from IISc as the best M.E. student in the Department of Electrical Communication Engineering and Electrical Engineering for 1998-1999 and Alumni Medal for the best Ph.D. thesis in the Division of Electrical Sciences for 2003-2004.
\end{IEEEbiographynophoto}

\begin{IEEEbiographynophoto}{Sidharth Jaggi} (MÕ00) received the B.Tech. degree in electrical engineering from the Indian Institute of Technology, Bombay, in 2000, and the M.S. and
Ph.D. degrees in 2001 and 2006, respectively, from the California Institute of Technology, Pasadena.

He is an Assistant Professor in Information Engineering at the Chinese University of Hong Kong.  His primary research interests are in network coding, coding theory and information theory.
\end{IEEEbiographynophoto}

\begin{IEEEbiographynophoto}{Michael Langberg} (M'07) received his B.Sc. in mathematics and computer science from Tel-Aviv University in 1996, and his M.Sc. and Ph.D. in computer science from the Weizmann Institute of Science in 1998 and 2003 respectively. 

Between 2003 and 2006, he was a postdoctoral scholar in the Computer Science and Electrical Engineering departments at the California Institute of Technology.   He is an Associate Professor in the Mathematics and Computer Science department at the Open University of Israel.  His research interests include information theory, combinatorics, and algorithm design.
\end{IEEEbiographynophoto}

\begin{IEEEbiographynophoto}{Anand D.~Sarwate} (S'99--M'09) received B.S. degrees in electrical engineering and computer science and mathematics from the Massachusetts Institute of Technology (MIT), Cambridge, in 2002 and the M.S. and Ph.D. degrees in electrical engineering in 2005 and 2008, respectively, from the University of California, Berkeley.  

From 2008 to 2011 he was a postdoctoral researcher at the Information Theory and Applications Center at the University of California, San Diego.  He is a Research Assistant Professor at the Toyota Technological Institute at Chicago. His research interests include information theory, distributed signal processing, machine learning, and privacy.

Dr. Sarwate received the Laya and Jerome B. Wiesner Student Art Award from MIT, and the Samuel Silver Memorial Scholarship Award and Demetri Angelakos Memorial Achievement Award from the EECS Department at University of California at Berkeley. He was awarded an NDSEG Fellowship from 2002 to 2005. He is a member of Phi Beta Kappa and Eta Kappa Nu.
\end{IEEEbiographynophoto}


\begin{thebibliography}{10}
\providecommand{\url}[1]{#1}
\csname url@samestyle\endcsname
\providecommand{\newblock}{\relax}
\providecommand{\bibinfo}[2]{#2}
\providecommand{\BIBentrySTDinterwordspacing}{\spaceskip=0pt\relax}
\providecommand{\BIBentryALTinterwordstretchfactor}{4}
\providecommand{\BIBentryALTinterwordspacing}{\spaceskip=\fontdimen2\font plus
\BIBentryALTinterwordstretchfactor\fontdimen3\font minus
  \fontdimen4\font\relax}
\providecommand{\BIBforeignlanguage}[2]{{%
\expandafter\ifx\csname l@#1\endcsname\relax
\typeout{** WARNING: IEEEtran.bst: No hyphenation pattern has been}%
\typeout{** loaded for the language `#1'. Using the pattern for}%
\typeout{** the default language instead.}%
\else
\language=\csname l@#1\endcsname
\fi
#2}}
\providecommand{\BIBdecl}{\relax}
\BIBdecl

\bibitem{isit2012}
B.~K. Dey, S.~Jaggi, M.~Langberg, and A.~D.Sarwate, ``Improved upper bounds on
  the capacity of binary channels with causal adversaries,'' in
  \emph{\sj{Proceedings of the International Symposium on Information Theory
  (ISIT)}}, 2012, pp. \sj{681--685}.

\bibitem{McEliece77}
R.~J. McEliece, E.~R. Rodemich, H.~Rumsey, Jr., and L.~R. Welch, ``New upper
  bounds on the rate of a code via the {D}elsarte-{M}ac{W}illiams
  inequalities,'' \emph{IEEE Trans. Information Theory}, vol. IT-23, no.~2, pp.
  157--166, 1977.

\bibitem{Gil52}
E.~N. Gilbert, ``A comparison of signalling alphabets,'' \emph{Bell Systems
  Technical Journal}, vol.~31, pp. 504--522, 1952.

\bibitem{Var57}
R.~R. Varshamov, ``Estimate of the number of signals in error correcting
  codes,'' \emph{Dokl. Acad. Nauk}, vol. 117, pp. 739--741, 1957.

\bibitem{TsaVN:07}
\sj{M. A. Tsfasman and S. G. Vl{\u{a}}dut and D. Nogin}, \emph{\sj{Algebraic
  geometric codes: basic notions}}, ser. \sj{Mathematical Surveys and
  Monographs}.\hskip 1em plus 0.5em minus 0.4em\relax \sj{American Mathematical
  Society}, \sj{2007}, vol. \sj{139}.

\bibitem{Singleton:64}
R.~Singleton, ``Maximum distance q-nary codes,'' \emph{IEEE Transactions on
  Information Theory}, vol.~10, no.~2, pp. 116--118, 1964.

\bibitem{ReedSolomon:60}
I.~S. Reed and G.~Solomon, ``Polynomial codes over certain finite fields,''
  \emph{Journal of the Society for Industrial and Applied Mathematics (SIAM)},
  vol.~8, no.~2, pp. 300--304, 1960.

\bibitem{Shannon48}
C.~E. Shannon, ``A mathematical theory of communication,'' \emph{The Bell
  System Technical Journal}, vol.~27, pp. 379--423,623--656, July, October
  1948.

\bibitem{HL11}
\sj{I. Haviv} and M.~Langberg, ``Beating the {Gilbert-Varshamov} bound for
  online channels,'' in \emph{Proceedings of the 2011 International Symposium
  on Information Theory}, St. Petersburg, Russia, 2011, pp. 1297--1301.

\bibitem{BlackwellBT:60random}
D.~Blackwell, L.~Breiman, and A.~Thomasian, ``The capacities of certain channel
  classes under random coding,'' \emph{Annals of Mathematical Statistics},
  vol.~31, no.~3, pp. 558--567, 1960.

\bibitem{AhlswedeW70:binary}
R.~Ahlswede and J.~Wolfowitz, ``The capacity of a channel with arbitrarily
  varying channel probability functions and binary output alphabet,''
  \emph{{Zeitschrift f\"{u}r Wahrscheinlichkeit und verwandte Gebiete}},
  vol.~15, no.~3, pp. 186--194, 1970.

\bibitem{Ahlswede:73fback}
R.~Ahlswede, ``Channels with arbitrarily varying channel probability functions
  in the presence of noiseless feedback,'' \emph{{Zeitschrift f\"{u}r
  Wahrscheinlichkeit und verwandte Gebiete}}, vol.~25, pp. 239--252, 1973.

\bibitem{CsiszarKorner}
I.~Csisz\'{a}r and J.~K\"{o}rner, \emph{Information Theory: Coding Theorems for
  Discrete Memoryless Systems}.\hskip 1em plus 0.5em minus 0.4em\relax
  Budapest: Akad\'{e}mi Kiad\'{o}, 1982.

\bibitem{CsiszarN:88positivity}
I.~Csisz\'{a}r and P.~Narayan, ``The capacity of the arbitrarily varying
  channel revisited : Positivity, constraints,'' \emph{IEEE Transactions on
  Information Theory}, vol.~34, no.~2, pp. 181--193, 1988.

\bibitem{SarwateG:10csi}
\BIBentryALTinterwordspacing
A.~D. Sarwate and M.~Gastpar, ``Rateless codes for {AVC} models,'' \emph{IEEE
  Transactions on Information Theory}, vol.~56, no.~7, pp. 3105--3114, July
  2010. [Online]. Available: \url{http://dx.doi.org/10.1109/TIT.2010.2048497}
\BIBentrySTDinterwordspacing

\bibitem{Langberg:04focs}
M.~Langberg, ``Private codes or succinct random codes that are (almost)
  perfect,'' in \emph{Proceedings of the 45th Annual IEEE Symposium on
  Foundations of Computer Science (FOCS 2004)}, Rome, Italy, 2004.

\bibitem{Smith:07scrambling}
A.~Smith, ``Scrambling adversarial errors using few random bits, optimal
  information reconciliation, and better private codes,'' in \emph{Proceedings
  of the 2007 ACM-SIAM Symposium on Discrete Algorithms (SODA 2007)}, 2007.

\bibitem{Langberg:08}
M.~Langberg, ``Oblivious channels and their capacity,'' \emph{IEEE Transactions
  on Information Theory}, vol.~54, no.~1, pp. 424--429, 2008.

\bibitem{DJLS10}
M.~Langberg, S.~Jaggi, and B.~Dey, ``Coding against delayed adversaries,'' in
  \emph{Proceedings of the 2010 International Symposium on Information Theory
  (ISIT)}, 2010.

\bibitem{GuruswamiS:09stoc}
V.~Guruswami and A.~Smith, ``\sj{Codes for Computationally Simple Channels:
  Explicit Constructions with Optimal Rate},'' in \emph{\sj{Proceedings of the
  Symposium on the Foundations of Computer Science (FOCS)}}, 2010, pp.
  \sj{723--732}.

\bibitem{CsiszarN:88constraints}
I.~Csisz\'{a}r and P.~Narayan, ``Arbitrarily varying channels with constrained
  inputs and states,'' \emph{IEEE Transactions on Information Theory}, vol.~34,
  no.~1, pp. 27--34, 1988.

\bibitem{DeyJL:09allerton}
\sj{B.K. Dey and S. Jaggi and M. Langberg}, ``\sj{Codes against Online
  Adversaries},'' in \emph{\sj{Proceedings of the 47th Annual Allerton
  Conference on Communication, Control, and Computing}}, Monticello, IL, USA,
  September-October \sj2009.

\bibitem{LangbergJD09}
M.~Langberg, S.~Jaggi, and B.~Dey, ``Binary causal-adversary channels,'' in
  \emph{Proceedings of the 2009 International Symposium on Information Theory
  (ISIT)}, 2009.

\bibitem{Plo:60}
\sj{M. Plotkin}, ``\sj{Binary codes with specified minimum distance},''
  \emph{\sj{IRE Transactions on Information Theory}}, vol. \sj{6}, no. \sj{4},
  pp. \sj{445--450}, \sj{1960}.

\bibitem{AloS:11}
N.~Alon and J.~H. Spencer, \emph{The probabilistic method}.\hskip 1em plus
  0.5em minus 0.4em\relax Wiley-Interscience, 2011, vol.~73.

\bibitem{CT06}
T.~M. Cover and J.~A. Thomas, \emph{Elements of information theory},
  2nd~ed.\hskip 1em plus 0.5em minus 0.4em\relax New York, NY, USA:
  Wiley-Interscience, 2006.

\bibitem{turan}
P.~Tur\'{a}n, ``On the theory of graphs,'' \emph{Colloqium Mathematicum},
  vol.~3, pp. 19--30, 1954.

\bibitem{Dun:90}
\sj{William Dunham}, \emph{\sj{Journey through Genius: The Great Theorems of
  Mathematics}}.\hskip 1em plus 0.5em minus 0.4em\relax \sj{New York, NY USA}:
  \sj{Wiley}, \sj{1990}.

\end{thebibliography}
\end{document}